\newcommand\ownremark[1]{}
\newtheorem{theorem}{Theorem}
\newtheorem{proposition}[theorem]{Proposition}
\newtheorem{lemma}[theorem]{Lemma}
\newtheorem{conjecture}[theorem]{Conjecture}
\renewcommand {\epsilon}{\varepsilon}
\newcommand{\CinftydOmega}{C^\infty(\partial\Omega)}
\newcommand{\eps}{\varepsilon}
\newcommand{\Rtilde}{\tilde{R}}
\newcommand{\bfg}{{\bf g}}
\newcommand{\geucl}{\bfg_{\text{eucl}}}
\newcommand{\Ocomp}{\R^n\setminus\overline{\Omega}}
\newcommand{\dvol}{\text{dvol}}
\newcommand{\epsdot}{\dot{\eps}}
\newcommand{\epsddot}{\ddot{\eps}}
\newcommand{\dn}{\partial_n}
\newcommand{\Cinfty}{C^\infty}
\newcommand{\scpr}[3]{\langle #1 \, |\, #2 \, | \, #3 \, \rangle } 
\newcommand{\calNdot}{\dot{\calN}}
\newcommand{\gdot}{\dot{g}}
\newcommand{\Hdot}{\dot{H}}
\newcommand\calH{{\mathcal H}}
\newcommand\calN{{\mathcal N}}
\newcommand\calO{{\mathcal O}}
\newcommand\calR{{\mathcal R}}
\newcommand\Id{{\rm Id}}
\newcommand\loc{{\rm loc}}
\newcommand{\Span}{\operatorname{span}}
\newcommand\vol{{\operatorname{vol}}}
\newcommand {\R}{\mathbb{R}}
\newcommand {\N}{\mathbb{N}}
\newcommand{\bfn}{{\boldsymbol{n}}}
\newcommand{\bfq}{{\boldsymbol{q}}}
\newcommand{\epsilondot}{\dot{\epsilon}}
\newcommand{\udot}{\dot{u}}
\newcommand{\uddot}{\ddot{u}}
\newcommand{\dO}{{\partial\Omega}}
\newcommand{\bfndot}{{\boldsymbol{\dot{n}}}}
\newcommand{\bfnddot}{{\boldsymbol{\ddot{n}}}}
\newcommand{\nablad}{\nabla_\partial}
\newcommand{\Deltad}{\Delta_\partial}
\renewcommand{\div}{\operatorname{div}}
\newcommand{\divd}{\div_\partial}
\newcommand{\intdO}{\int_\dO}
\newcommand{\Omegabar}{\overline{\Omega}}
\newcommand{\uhat}{\hat{u}}
\newcommand{\ghat}{\hat{g}}
\newcommand{\Rinfty}{\R\cup\{\infty\}}
\newcommand{\calHbar}{\overline{\calH}}
\newcommand{\Gtilde}{\tilde{G}}
\begin{document}
\title{The plasmonic eigenvalue problem}
\author{
Daniel Grieser}
\address{Institut f\"ur Mathematik, Carl-von-Ossietzky Universit\"at Oldenburg, 26111 Oldenburg, Germany}
\email{daniel.grieser@uni-oldenburg.de}
\date{\today}

\subjclass[2010]{78A99 
			35Pxx 
			31A05 
     }

\keywords{Dirichlet-Neumann operators, pseudodifferential operators, perturbation theory, eigenvalue problems}

\begin{abstract}
A plasmon of a  bounded domain $\Omega\subset\R^n$ is a non-trivial bounded harmonic function on $\R^n\setminus\partial\Omega$ which is continuous at $\partial\Omega$ and whose exterior and interior normal derivatives at $\partial\Omega$ have a constant ratio. We call this ratio a plasmonic eigenvalue of $\Omega$. Plasmons arise in the description of electromagnetic waves hitting a metallic particle $\Omega$. We investigate these eigenvalues and prove that they form a sequence of numbers converging to one. Also, we prove regularity of plasmons, derive a variational characterization, and prove a second order perturbation formula. The problem can be reformulated in terms  of Dirichlet-Neumann operators, and as a side result we derive a formula for the shape derivative of these operators.
\end{abstract}

\maketitle
\tableofcontents

\section{Problem and results}
Let $\Omega\subset\R^n$, $n\geq 3$
be a bounded domain with smooth boundary $\partial\Omega$ and connected complement $\R^n\setminus\Omega$.
For functions $u_-$ on $\Omegabar$ and $u_+$ on  $\R^n\setminus\Omega$, assumed to be smooth up to the common boundary $\partial\Omega$, we consider the following problem:
\begin{eqnarray}
\Delta u_\pm=0\quad &&\text{in}\quad \Omega \text{ resp. }\R^n\setminus \Omegabar\label{I1}\\
u_--u_+=0 \quad &&\text{on}\quad  \partial\Omega\label{I2}\\
\varepsilon\partial_n u_- +\partial_n u_+ =0\quad &&\text{on}\quad  \partial\Omega\label{I3},\\
u_+(x) = O(|x|^{2-n}) \quad &&\text{as}\quad |x|\to\infty \label{I4}
\end{eqnarray}
where $\partial_n u_\pm=\boldsymbol{n}\cdot \nabla u_\pm$ is the derivative of $u_\pm$ in the direction of the unit  normal $\boldsymbol{n}$ pointing out of $\Omega$. Here $\eps\in\R$ is a constant. Thus, the function $u$ on $\R^n$ defined by $u_-$ on $\Omega$ and by $u_+$ on $\R^n\setminus\Omega$ is continuous at $\dO$, but its normal derivative may have a jump, and the ratio of inner and outer normal derivative is required to be constant on $\dO$. It will be convenient to also allow $\eps=\infty$, in which case \eqref{I3} is interpreted as $\partial_n u_- \equiv 0$, $\partial_n u_+\not\equiv 0$.

We call $\varepsilon\in\Rinfty$ a \emph{plasmonic eigenvalue} of $\Omega$ if the system \eqref{I1}-\eqref{I4} has a solution $u\not\equiv 0$, and then $u$ is called a plasmonic eigenfunction or (surface) {\em plasmon } for $\varepsilon$. This name is justified by the origin in the physics of plasmons, and by the fact, established below, that $\eps^{-1}$ is indeed an eigenvalue of a certain operator on $\partial\Omega$. For a given plasmonic eigenvalue we call the linear space of solutions $u$ its plasmonic eigenspace.
What is usually denoted $\varepsilon$ in the physics literature is $-\varepsilon$ in our notation. Our choice of sign is motivated by the fact that $\varepsilon>0$ then. 
A corresponding problem may also be formulated for $n=2$, but this requires a few minor adjustments, see Section \ref{subsec:2d}.

Plasmons have been studied extensively in the physics literature, though principally for simple geometries, where explicit calculations are possible, and for more general domains by numerical methods. See the end of the introduction for more on the physics background and references.
The purpose of this paper is to put the problem in its natural mathematical setting and to bring some more refined mathematical techniques to bear on the study of plasmonic eigenvalues. This will show that also for general domains interesting conclusions can be drawn even without numerical calculation. In particular, we will employ microlocal analysis to analyze the asymptotic behavior of the plasmonic eigenvalues, and differential geometry to calculate the perturbation of plasmonic eigenvalues under perturbation of the domain, to second order. This is the order needed for investigating the behavior under random perturbations of the domain, a topic of physical interest. Also, the perturbation results can be useful in obtaining good approximations for non-infinitesimal deformations of the sphere, for example, see  \cite{BieEtAl09}.

\medskip

We now state our results. First, we have basic results on existence and regularity of plasmons, and asymptotics of plasmonic eigenvalues.
Denote 
 $$\calH = \{u \text{ satisfying } \eqref{I1}, \eqref{I2}, \eqref{I4}\},$$
and on $\calH$
consider the scalar product (for simplicity we assume all functions to be real-valued)
\begin{equation}
\label{eqn:scprod+} 
 (u,v)_+ := \int_{\Omega^c} \nabla u\cdot\nabla v 
\end{equation}
where $\Omega^c=\R^n\setminus\Omega$. This is finite since harmonic functions on $\Omega^{c}$ satisfying \eqref{I4} have $\nabla u (x)= O(|x|^{1-n})$. Nondegeneracy follows from the unique solvability of the Dirichlet problem for $\Omega$. Thus $\calH$ with $(\cdot,\cdot)_+$ is a pre-Hilbert space.

\begin{theorem}[Existence, completeness, asymptotics] \label{thm:existence+asymp}
Let $\Omega\subset\R^n$, $n\geq3$ be a bounded domain with smooth boundary. Then the finite plasmonic eigenvalues of $\Omega$ form a sequence $\eps_1,\eps_2,\dots$ of positive numbers. For $\eps\neq 1$  the corresponding eigenspaces are finite dimensional.

Furthermore, $\eps_k\to 1$ as $k\to \infty$. Also, if
$E_\eps$ is the space of plasmons $u$ with eigenvalue $\eps$ then we have an orthogonal decomposition
$$ \calH = \bigoplus_{\eps\in\Rinfty} E_\eps $$
Also, $\eps=\infty$ is a plasmonic eigenvalue, with one-dimensional eigenspace
$$ E_\infty = \{u\in\calH:\, u_-\text{ is constant}\} $$
\end{theorem}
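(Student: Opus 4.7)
The plan is to recast \eqref{I1}--\eqref{I4} as a bounded self-adjoint spectral problem $Au=\eps^{-1}u$ on the completion $\calHbar$ of $\calH$, to identify $A$ with a zeroth-order pseudodifferential operator on $\dO$ that differs from the identity by a compact operator, and to read off all the claims from the spectral theorem.

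First I would carry out a variational reformulation. For $u,v\in\calH$, introduce the companion form $(u,v)_-:=\int_\Omega\nabla u\cdot\nabla v$. Applying Green's identity separately in $\Omega$ and in $\Omega^c$ (the boundary term at infinity vanishes thanks to \eqref{I4}) gives
\[
(u,v)_-=\int_\dO (\dn u_-)\,v\,d\sigma,\qquad (u,v)_+=-\int_\dO (\dn u_+)\,v\,d\sigma,
\]
so \eqref{I3} is equivalent to the weak identity $(u,v)_+=\eps(u,v)_-$ for every $v\in\calH$. For $n\geq 3$ the trace map $u\mapsto u|_\dO$ is, by unique solvability of the interior and exterior Dirichlet problems, a bijection $\calH\to\Cinfty(\dO)$; under this identification $(u,u)_+$ corresponds to $-\langle N_+ f,f\rangle_{L^2(\dO)}$, where $N_+$ is the exterior Dirichlet--Neumann operator. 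Since $-N_+$ will turn out in the next step to be elliptic positive of order $1$, the completion $\calHbar$ is canonically isomorphic to $H^{1/2}(\dO)$, and the Riesz representation theorem produces a bounded positive self-adjoint operator $A$ on $\calHbar$ with $(Au,v)_+=(u,v)_-$; plasmons for $\eps$ are precisely eigenvectors of $A$ for the eigenvalue $\eps^{-1}$.

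The decisive step is the microlocal analysis of $A$. Under the identification $\calHbar\cong H^{1/2}(\dO)$ the operator $A$ corresponds to $\tilde{A}=(-N_+)^{-1}N_-$, where $N_-$ is the interior Dirichlet--Neumann operator. A standard symbolic parametrix construction in geodesic boundary coordinates shows that $N_-$ and $-N_+$ are classical pseudodifferential operators of order $1$ on $\dO$ with the same positive principal symbol $|\xi|_g$, namely that of $\sqrt{-\Deltad}$. Consequently $\tilde{A}$ is of order $0$ with principal symbol identically equal to $1$, and $\tilde{A}-\Id$ is pseudodifferential of order $-1$, hence compact on $H^{1/2}(\dO)$; equivalently, $A-\Id$ is compact on $\calHbar$.

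From the spectral theorem for compact self-adjoint perturbations of the identity one now reads off a sequence of real eigenvalues $\lambda_k$ of $A$ whose only possible accumulation point is $1$, with each eigenspace for $\lambda\neq 1$ finite-dimensional, and an orthogonal Hilbert-space decomposition of $\calHbar$ into the eigenspaces; elliptic regularity for the DtN equation shows the eigenfunctions are smooth, so each $E_\eps$ lies in $\calH$ itself. Positivity $(Au,u)_+=(u,u)_-\geq 0$ forces $\lambda_k\geq 0$ and hence $\eps_k=\lambda_k^{-1}>0$. Finally, $\lambda=0$ is equivalent to $(u,u)_-=0$, i.e.\ $u_-$ constant, and (again using $n\geq 3$) each constant boundary value determines a unique $u\in\calH$ via the decaying exterior harmonic extension, so $E_\infty=\ker A$ is one-dimensional. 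The main obstacle is the pseudodifferential input of the previous paragraph; once this is in place, everything else is a routine application of Hilbert-space spectral theory.
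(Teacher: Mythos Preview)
Your proposal is correct and follows essentially the same route as the paper: both reduce the problem to the boundary via the interior and exterior Dirichlet--Neumann operators, identify the relevant operator with $A=-\calN_+^{-1}\calN_-$ acting on (the completion of) $\Cinfty(\dO)\cong H^{1/2}(\dO)$, use the pseudodifferential calculus to see that $A-\Id$ is of order $-1$ and hence compact, and then read everything off from the spectral theorem. Your Riesz-representation framing $(Au,v)_+=(u,v)_-$ is just a repackaging of the paper's direct definition $A=-\calN_+^{-1}\calN_-$, and your explicit remark that elliptic regularity puts the eigenfunctions back into $\calH$ is a point the paper handles separately in its regularity theorem.
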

See Section \ref{subsec:2d} for the case $n=2$.

One may also formulate the eigenvalue problem with weaker regularity assumptions. We then have the following result.
\begin{theorem}[Regularity] \label{thm:regularity}
Let $u$ be a solution of $\eqref{I1}-\eqref{I4}$ with
 $u_-\in H^s(\Omega)$, $u_+\in H^{s}_{\loc}(\Omega^c)$ for some $s>3/2$.

If $\eps\neq 1$ then $u$ is $C^\infty$ on $\R^n\setminus\dO$ up to the boundary from each side.
\end{theorem}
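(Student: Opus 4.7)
The plan is to reduce the problem to elliptic regularity on $\partial\Omega$ via Dirichlet-Neumann operators. Let $\calN_-\colon f \mapsto \partial_n u_-|_{\partial\Omega}$, where $u_-$ is the harmonic extension of $f$ to $\Omega$, and let $\calN_+\colon f\mapsto \partial_n u_+|_{\partial\Omega}$, where $u_+$ is the harmonic extension of $f$ to $\Omega^c$ satisfying the decay condition \eqref{I4}. Both are classical pseudodifferential operators of order $1$ on $\partial\Omega$. Because $\bfn$ is outward to $\Omega$ (hence inward to $\Omega^c$), the principal symbols are $\sigma(\calN_-)(x',\xi') = |\xi'|$ and $\sigma(\calN_+)(x',\xi') = -|\xi'|$, computed in the standard way by freezing coefficients and solving the resulting ODE in the normal direction with the correct sign of decay.

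Given a solution $u$ of \eqref{I1}--\eqref{I4} with the stated Sobolev regularity, set $f := u|_{\partial\Omega}$. Since $s > 3/2$, the trace $f$ lies in $H^{s-1/2}(\partial\Omega)$ with $s-1/2 > 1$, and the normal derivatives on each side lie in $H^{s-3/2}(\partial\Omega)$. The interface conditions \eqref{I2}, \eqref{I3} together with \eqref{I1}, \eqref{I4} are then equivalent to the single pseudodifferential equation
\begin{equation*}
(\eps \calN_- + \calN_+)f = 0 \qquad \text{on } \partial\Omega,
\end{equation*}
since the harmonic extensions $u_\pm$ are uniquely determined by $f$.

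The principal symbol of $\eps\calN_- + \calN_+$ is $(\eps - 1)|\xi'|$, so for $\eps \neq 1$ this operator is elliptic of order $1$ on the closed manifold $\partial\Omega$. Elliptic regularity (pseudodifferential parametrix) then bootstraps $f \in H^{s-1/2}(\partial\Omega)$ up to $f \in C^\infty(\partial\Omega)$. Finally, standard elliptic boundary regularity for the interior Dirichlet problem on $\Omega$ and for the exterior Dirichlet problem on $\Omega^c$ (the latter reduced to a bounded domain by inversion or by working in weighted spaces to accommodate \eqref{I4}) gives $u_\pm \in C^\infty$ up to $\partial\Omega$ from each side.

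The main technical point is the symbol calculation and the justification that $\eps\calN_- + \calN_+$, viewed on boundary Sobolev scales, acts as a proper elliptic pseudodifferential operator in the exterior case too; this is where the decay condition \eqref{I4} enters, ensuring uniqueness and hence the well-definedness and correct principal symbol of $\calN_+$. The case $\eps = 1$ is genuinely excluded because the principal symbol then vanishes identically, ellipticity fails, and indeed \eqref{I1}--\eqref{I4} with $\eps = 1$ is just the requirement that $u$ be globally harmonic on $\R^n$ with the given decay, which by Liouville forces $u \equiv 0$; this degenerate situation is consistent with the hypothesis of the theorem.
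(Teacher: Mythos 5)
Your core argument is exactly the paper's: pass to the boundary via the Dirichlet--Neumann operators $\calN_\pm$, observe that $\eps\calN_-+\calN_+$ is a classical $\Psi$DO of order $1$ with principal symbol $(\eps-1)|\xi'|$, hence elliptic when $\eps\neq 1$, bootstrap the boundary value $f$ to $C^\infty(\dO)$ by elliptic regularity, and then invoke boundary regularity for the interior and exterior Dirichlet problems. That is precisely the proof given in the paper.

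However, your closing remark about $\eps=1$ is incorrect. The condition $\partial_n u_- + \partial_n u_+=0$ (i.e.\ $\eps=1$) does \emph{not} say that $u$ is globally harmonic across $\dO$. Both $\partial_n u_-$ and $\partial_n u_+$ are taken with respect to the same outward normal $\bfn$, so global harmonicity (no jump in the normal derivative, i.e.\ $\Delta u=0$ distributionally on all of $\R^n$) would be the condition $\partial_n u_- - \partial_n u_+ = 0$, which is $\eps=-1$ in the paper's sign convention. In fact the paper explicitly notes, and demonstrates in Section~\ref{subsec:disk}, that for the planar disk $\eps=1$ is a plasmonic eigenvalue with an infinite-dimensional eigenspace (via the Kelvin transform), including non-smooth plasmons, which is precisely why the regularity theorem must exclude $\eps=1$. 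Your Liouville argument would suggest there are no such solutions, contradicting this example. This does not affect the validity of your proof of the stated theorem, since $\eps\neq1$ is a hypothesis, but the heuristic you offer for why $\eps=1$ is excluded should be replaced by: for $\eps=1$ the principal symbol vanishes, ellipticity fails, and explicit non-smooth solutions exist.
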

Here $H^s$ is the Sobolev space of order $s$, and $H^s_\loc(\Omega^c)$ is the space of functions whose restriction to $K_R=\Omega^c\cap\{|x|<R\}$ is in $H^s(K_R)$ for all $R$.
The range of $s$ is chosen so that the restrictions of $u$ and $\dn u$ to the boundary are well-defined by the Sobolev restriction theorem. 

Examples show that the statement of the theorem is not true in general for $\eps=1$, see Section \ref{subsec:disk}. 
Note that the only issue is boundary smoothness since harmonic functions are always smooth in the interior of the domain.

The main ingredient in the proof of Theorems \ref{thm:existence+asymp} and \ref{thm:regularity} is a reformulation as a problem on $\dO$ in terms of Dirichlet-to-Neumann operators, see Section \ref{subsec:dir-neu}. This is closely related to the fact that for a harmonic function $u_-$ on $\Omega$ we have
\begin{equation}
\label{eq:ibp}
\int_{\dO} u_- \partial_n u_- = \int_\Omega |\nabla u_-|^2 =: \|u\|_-^2
\end{equation}
which follows from Green's formula. Similarly, for a harmonic function $u_+$ on $\R^n\setminus\overline{\Omega}$ satisfying \eqref{I4}, we have
\begin{equation}
\label{eq:ibp+}
-\int_{\dO} u_+ \partial_n u_+ =  \int_{\Omega^c} |\nabla u_+|^2
=: \|u\|_+^2
\end{equation}
Theorem \ref{thm:existence+asymp} was formulated in terms of the scalar product $(\cdot,\cdot)_+$ corresponding to the norm $\|\cdot\|_+$ on $\calH$. The orthogonality there also holds with respect to the bilinear form $(u,v)_- := \int_\Omega \nabla u \cdot \nabla v$ corresponding to $\|\cdot\|_-$ which, however, is not positive definite -- it is zero on $E_\infty$.

Equations \eqref{eq:ibp} and \eqref{eq:ibp+} imply that for a plasmon $u$ the plasmonic eigenvalue equals
\begin{equation}
\label{eq:el-functional}
\eps = R_p (u),\qquad R_p (u) := \frac{\int_{\Omega^c} |\nabla u_+|^2}{\int_{\Omega} |\nabla u_-|^2}
\end{equation}
Denote by $\Hdot^1(\R^n)=\{u\in L^2_\loc(\R^n): \nabla u \in L^2(\R^n)\}$ the first homogeneous Sobolev space. Then we have:

\begin{theorem}[Variational principle] \label{thm:euler-lagrange}
\eqref{I1}-\eqref{I3} is the Euler-Lagrange equation of the functional $R_p$
on  $\Hdot^1(\R^n)\setminus 0$. More precisely, the finite plasmonic eigenvalues are precisely the positive critical values of $R_p$ and the associated plasmons are the corresponding critical points of $R_p$.
\end{theorem}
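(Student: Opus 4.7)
The plan is to derive the Euler-Lagrange equation of $R_p$ by a direct first-variation computation on $\Hdot^1(\R^n)\setminus 0$, and then to check that the resulting weak system is exactly \eqref{I1}--\eqref{I4} with eigenvalue $\eps = R_p(u)$.

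Fix $u\in\Hdot^1(\R^n)$ with $\int_\Omega|\nabla u_-|^2>0$, set $\eps=R_p(u)$, and let $\phi\in\Hdot^1(\R^n)$. The quotient rule applied at $t=0$ gives
$$\left.\tfrac{d}{dt}\right|_0 R_p(u+t\phi) = \frac{2}{\int_\Omega|\nabla u_-|^2}\left[\int_{\Omega^c}\nabla u_+\cdot\nabla\phi_+ - \eps\int_\Omega\nabla u_-\cdot\nabla\phi_-\right],$$
so $u$ is a critical point of $R_p$ with critical value $\eps$ iff the bracketed bilinear form vanishes for all $\phi\in\Hdot^1(\R^n)$. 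This is the weak form of the plasmonic problem.

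To extract \eqref{I1}--\eqref{I4} from the weak form, I would first test against $\phi\in C_c^\infty(\Omega)$ (so $\phi_+=0$) to obtain $\Delta u_-=0$ in $\Omega$, and then against $\phi\in C_c^\infty(\Omega^c)$ to obtain $\Delta u_+=0$ in $\Omega^c$; both statements upgrade to classical harmonicity by interior regularity. For general $\phi$, Green's formula converts the bulk integrals into $-\int_\dO(\partial_n u_+ + \eps\partial_n u_-)\phi\,dS$ (the boundary term at infinity vanishes because $\nabla u_+\in L^2$), whence \eqref{I3}. The interface matching \eqref{I2} is automatic since $u$ is globally in $\Hdot^1$ and the two one-sided traces on $\dO$ coincide. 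Finally, \eqref{I4} follows from the harmonic expansion at infinity (for $n\geq 3$, $u_+=c+O(|x|^{2-n})$) combined with Sobolev embedding $\Hdot^1(\R^n)\hookrightarrow L^{2n/(n-2)}$, which forces $c=0$. The converse, that every plasmon is a critical point, is the same Green's formula read backwards, noting that \eqref{I4} implies $\nabla u_+\in L^2(\Omega^c)$.

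The main obstacle is that a priori a critical point lies only in $H^1$, below the $H^s$ ($s>3/2$) threshold of Theorem \ref{thm:regularity}, so the Green's formula step is not immediately justified in the classical sense. The plan is to bootstrap: reformulate the weak equation as a spectral problem for the Dirichlet-to-Neumann operators on $\dO$ (as in the proof of Theorem \ref{thm:regularity}), use their pseudodifferential ellipticity to upgrade the boundary trace to $C^\infty(\dO)$ whenever $\eps\neq 1$, and then solve the two Dirichlet problems to recover $u$ as a classical plasmon. Positivity of the critical value rules out the degenerate minima of $R_p$ at functions with $u_+$ constant on $\Omega^c$ (which are not plasmons), and finiteness excludes $\eps=\infty$, which corresponds to the stratum $\{\int_\Omega|\nabla u_-|^2=0\}$ outside the domain of definition of $R_p$ and is handled separately in Theorem \ref{thm:existence+asymp}.
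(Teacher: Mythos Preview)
Your proposal is correct and follows essentially the same route as the paper: compute the first variation of $R_p$, test against compactly supported functions to extract harmonicity and the transmission condition \eqref{I3}, get \eqref{I2} from the global $H^1$ trace, and deduce \eqref{I4} from $\nabla u_+\in L^2(\Omega^c)$. The paper is terser and simply invokes integration by parts after establishing harmonicity, without discussing whether $\partial_n u_\pm$ make sense for a mere $H^1$ critical point; your explicit acknowledgment of this gap and the proposed bootstrap via the ellipticity of $\eps\calN_-+\calN_+$ (for $\eps\neq 1$) is a genuine improvement in rigor over the paper's argument, and is exactly the mechanism behind Theorem~\ref{thm:regularity}.

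One minor point: your use of the embedding $\Hdot^1(\R^n)\hookrightarrow L^{2n/(n-2)}$ to kill the constant at infinity is slightly delicate, since the paper's $\Hdot^1$ is defined as $\{u\in L^2_{\loc}:\nabla u\in L^2\}$ and therefore contains the constants, for which the embedding fails. The paper's corresponding sentence (``a harmonic function on $\Omega^c$ with $\nabla u\in L^2$ satisfies \eqref{I4}'') has the same loose end. In either version the fix is the same: $R_p$ depends only on $\nabla u$, so critical points are determined modulo additive constants, and within each coset there is a unique representative satisfying \eqref{I4}.
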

Note that zero is also a critical value of $R_p$, with any smooth function with non-empty support contained in  $\Omega$ as critical point. Of course, these do not correspond to plasmons.

In this theorem we use $H^1_\loc$ instead of $H^1$ since functions satisfying \eqref{I4} need not be in $L^2(\R^n)$ for $n\leq 4$.
\medskip

Next we consider the behavior of plasmonic eigenvalues under perturbation of the domain. Thus, fix $\Omega$ and let $a:\partial\Omega\to\R$ be a smooth function, and for real numbers $h$ close to zero consider the bounded domain $\Omega(h)$ with smooth boundary
\begin{equation} \label{eq:def Omega_h}
 \partial\Omega(h) = \{x + h a(x)\,\bfn(x):\, x\in \partial\Omega\}.
\end{equation}
Thus, $\Omega(0)=\Omega$ and $\Omega(h)$ is obtained by shifting the boundary in the normal direction by $ha$. $h$ is the order of magnitude of the shift and $a$ is a 'shape' function.

In the following theorem, we state the perturbation formulas only in the physically interesting case $n=3$. Generalization to any $n\geq2$ is straightforward.

{\bf Notation from differential geometry:} $H$ and $K$ denote the mean and Gau{\ss} curvature of $\dO$, $W_0=W-HI$ the trace free part of the Weingarten map $W$, and $\div_\partial$ and $\nabla_\partial$ the divergence and gradient operator in the surface $\dO$. The definitions of these quantities are recalled in Section \ref{secperturb}. Also,
\begin{equation}
\label{eq:def scalar product dO}
 \langle g,g' \rangle := \int_{\dO} g g' \,dS,\qquad g,g'\in \Cinfty(\dO)
\end{equation}
where $dS$ is surface measure (all functions are assumed real-valued), and for an operator $A$ on $\Cinfty(\dO)$ we let
$$ \scpr{g}{A}{g'} := \langle g, Ag' \rangle $$
Similarly, for vector fields $V,V'$ on $\dO$ and operator $A$ acting on vector fields, $\scpr V  A {V'} := \langle V, AV' \rangle :=\int_{\dO} V\cdot AV'\,dS$. All operators appearing below are self-adjoint.

\begin{theorem} \label{thm:perturbation}
Let $\eps\neq 1$ be a finite plasmonic eigenvalue of $\Omega\subset\R^3$ with eigenspace $E$. Then there are $h_0>0$ and real analytic functions
$h\mapsto \eps^{(i)}(h)$, $h\mapsto u^{(i)}(h)$ defined for $|h|<h_0$, $i=1,\dots,\dim E$, such that for each $h$ the numbers $\eps^{(i)}(h)$, $i=1,\dots,\dim E$ are plasmonic eigenvalues of $\Omega(h)$  with eigenfunctions $u^{(i)}(h)$, and $u^{(1)}(0),\dots,u^{(\dim E)}(0)$ are a basis of $E$.

For a fixed analytic branch $\eps(h)$, $u(h)$ of eigenvalue and eigenfunction with $\eps(0)=\eps$ and $\|u(h)\|_-=1$ for each $h$ we have for $\dot{\varepsilon}:=\frac{d}{dh}{\varepsilon}(0), \ \dot{u}(x):=\frac{d}{dh}{u}(x,0),\
\ddot{\varepsilon}:=\left(\frac{d}{dh}\right)^2{\varepsilon}(0)
$ the formulas:
\renewcommand{\arraystretch}{1.5}
\begin{align}
\epsdot & = q_1(u) := (\eps+1) \big[
- \scpr{\nablad u_-}{a}{\nablad u_-} +
\eps \scpr {\dn u_-}{a}{\dn u_-}\big] \label{eq:epsdot}\\
&= (\eps  + 1) \int_{\dO} a \left[-|\nablad u_-|^2 + \eps (\dn u_-)^2 \right]\, dS \notag \\
\label{eq:epsddot}
\tfrac12\epsddot & =
\begin{array}[t]{c@{\,\langle\,}l@{\,|\,}c@{\,|\,}l@{\,\rangle}}
& \nablad u_-  & {-\epsdot a - (\eps+1)a^2 W_0}  &
{\nablad  u_-} \\ 
+ &  \nablad u_- & (\eps^2-1) a & {\nablad(a\dn u_-)} \\
+ & {\nablad u_-} & {-(\eps+1)a} & {\nablad\udot_-}\\
+ & u_- & -\epsdot & \dn \udot_-\\
+  & {\dn u_-} & {a\eps (\epsdot + (\eps+1)aH)} & {\dn u_-} \\
+ &  {\dn u_-} & {\eps(\eps+1)a} & {\dn \udot_-}
\end{array}
\end{align}
Here, $\udot$ is a solution of the inhomogeneous system
\begin{equation}
\label{eq:udot}
\begin{aligned}
\Delta \dot{u} &=0\quad &&\text{in}\quad \R^3\setminus
\partial\Omega,\\
\dot{u}_--\dot{u}_+ &= -(\varepsilon+1) a\partial_n u_-\quad &&\text{on}\quad \partial\Omega,\\
\varepsilon \partial_n \dot{u}_- + \partial_n
\dot{u}_+&=-\dot{\varepsilon}\partial_n u_-+(\varepsilon+1)
\divd(a\nablad u_-)\quad&&\text{on}\quad
\partial\Omega
\end{aligned}
\end{equation}

Also, $u(0)$ is a critical point of the functional $R_p'(u) = \frac{q_1(u)}{\|u\|_-^2}$ on $E$, with $q_1$ defined in \eqref{eq:epsdot}, with critical value $\epsdot$.
\end{theorem}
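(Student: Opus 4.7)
The plan is to pass to the boundary via the Dirichlet--Neumann operators $\Lambda_\pm(h)$ on $\dO(h)$, apply analytic perturbation theory after pulling everything back to the fixed boundary $\dO$, and then compute the first two derivatives of the eigenvalue by a Green-identity trick.

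\textbf{Analytic branches.} By the reformulation in Section \ref{subsec:dir-neu}, the plasmonic eigenvalue problem on $\Omega(h)$ is equivalent to a generalized eigenvalue problem of the form $\Lambda_+(h) f = -\eps\Lambda_-(h) f$ on $\dO(h)$. Pulling $\Lambda_\pm(h)$ back to $\dO$ via $F_h(x) := x + h a(x)\bfn(x)$ produces an analytic family of classical pseudodifferential operators on the fixed space $\Cinfty(\dO)$; the full symbols depend analytically on the boundary. Combining with the $(\cdot,\cdot)_-$ inner product on $\calH$ symmetrizes the family, and Kato's analytic perturbation theorem applied to this self-adjoint analytic family produces the branches $\eps^{(i)}(h),\,u^{(i)}(h)$ with $u^{(1)}(0),\dots,u^{(\dim E)}(0)$ a basis of $E$.

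\textbf{System for $\dot u$.} Away from $\dO(h)$ the function $u(\cdot,h)$ is harmonic, so differentiating in $h$ at fixed $x\notin\dO$ gives $\Delta\dot u = 0$ on $\R^3\setminus\dO$. For the jump conditions, we pull them back by $F_h$ and differentiate at $h=0$ using the standard shape-derivative identities $\frac{d}{dh}|_0\bfn(F_h(x),h) = -\nablad a(x)$ and, for a function harmonic in either $\Omega$ or $\Omega^c$, the Laplace identity $\partial_n^2 u_\pm = -H\,\dn u_\pm - \Deltad u_\pm$ on $\dO$. After using $u_-=u_+$ and $\dn u_+=-\eps\dn u_-$ on $\dO$ to simplify, the combination $\eps\dn u_-+\dn u_+$ pulled back and differentiated yields exactly $-\dot\eps\,\dn u_- + (\eps+1)\divd(a\nablad u_-)$, and the first jump condition similarly produces $-(\eps+1)a\dn u_-$; these are the boundary data in \eqref{eq:udot}.

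\textbf{First-order formula.} Compute $I:=\int_\dO u\,[\eps\dn\dot u_- + \dn\dot u_+]$ in two ways. Substituting the second equation of \eqref{eq:udot} and integrating $\divd$ by parts on the closed surface $\dO$ gives
$$ I \;=\; -\dot\eps\,\|u\|_-^2 \;-\;(\eps+1)\int_\dO a\,|\nablad u_-|^2. $$
On the other hand, Green's identity in $\Omega$ (and in $\Omega^c$, with the decay \eqref{I4}) lets us swap $u$ and $\dot u$, producing $\int_\dO (\dot u_-\dn u_- + \dot u_+\dn u_+) \cdot\eps^\bullet$-terms; applying the jump condition $\dot u_- - \dot u_+ = -(\eps+1)a\dn u_-$ together with $\dn u_+ = -\eps\dn u_-$ collapses $I$ to $-\eps(\eps+1)\int_\dO a\,(\dn u_-)^2$. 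Equating the two expressions and using the normalization $\|u\|_-=1$ yields \eqref{eq:epsdot}.

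\textbf{Second-order formula and critical point property.} For \eqref{eq:epsddot}, differentiate the pulled-back jump conditions once more. The second shape derivatives of the unit normal, of the surface measure, and of $\dn u_\pm(F_h(x),h)$ produce terms involving the trace-free Weingarten map $W_0$ and the mean curvature $H$, multiplied by $a^2$, together with mixed $a\cdot\dot u$ contributions. Repeating the Green-pairing of the previous step with $\ddot u$ in place of $\dot u$, the $\ddot u$ contributions reduce to $\ddot\eps\,\|u\|_-^2$ because $u$ satisfies the unperturbed eigenvalue equation, while the remaining inhomogeneous terms assemble into the six bilinear pairings displayed in \eqref{eq:epsddot}; this bookkeeping of second-order shape terms is the main technical obstacle. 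Finally, the critical-point statement is the standard degenerate Kato--Rellich consequence: the Fredholm solvability of \eqref{eq:udot} requires, for every $v\in E$, the right-hand side to be orthogonal to $v$, which is precisely the Rayleigh-type eigenvalue equation for $q_1$ restricted to $E$ with normalization $\|\cdot\|_-$; hence the admissible directions $u(0)$ are critical points of $R_p'$ on $E$ with critical value $\dot\eps$.
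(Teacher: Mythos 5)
Your overall strategy — pull back to the fixed boundary, apply Kato analytic perturbation, derive inhomogeneous systems by differentiating the jump conditions, and extract $\dot\eps$ from a compatibility/Green-pairing argument — is essentially the same as the paper's. Your Green-identity derivation of \eqref{eq:epsdot} is correct and amounts to the paper's Lemma \ref{lem:compat} (solvability of the inhomogeneous problem forces $\langle \tilde G,w_-\rangle=\eps\langle F,\partial_n w_-\rangle$ for all $w\in E$, and you verify this by two evaluations of $\int_\dO u\,[\eps\dn\dot u_-+\dn\dot u_+]$). The critical-point statement is also correctly handled.

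The genuine gap is \eqref{eq:epsddot}. You write that the second shape derivatives of the normal, of $\dn u(F_h(x),h)$, etc., ``produce terms involving $W_0$ and $H$ \dots together with mixed $a\cdot\dot u$ contributions'' and that ``the remaining inhomogeneous terms assemble into the six bilinear pairings''; but this is precisely the content of the theorem and none of it is actually derived. The paper's proof spends most of its effort here: one needs the second-order shape derivatives $\ddot{\bfn}=-W(\nablad a^2)-|\nablad a|^2\bfn$, the identity $(\bfn\cdot\nabla)^3 u = 2(-H\Deltad-\divd W\nablad+\nablad H\cdot\nablad)u + (-\Deltad+8H^2-2K)\dn u$ for harmonic $u$, the resulting operators $P_2,Q_2$ in the expansion of $(\dn u)^{\cdot\cdot}$, and the simplification $P_2-2Q_1P_1=-2\divd(a^2W_0)\nablad$ that makes $W_0$ appear. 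Without carrying out this bookkeeping you have not proved the displayed formula for $\tfrac12\epsddot$; you have only asserted that the method would give it. This is the missing step.

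Two smaller points. First, your quoted identity $\partial_n^2 u_\pm=-H\dn u_\pm-\Deltad u_\pm$ disagrees with the paper's $\partial_n^2 u=-\Deltad u+2H\dn u$ (sign and a factor of $2$). For \eqref{eq:udot} this is harmless because the $H$-terms multiply $\eps\dn u_-+\dn u_+=0$ and cancel, but a wrong curvature coefficient here would propagate into the second-order formula, where the $H$-terms do not cancel; so if you intend to carry out the second-order computation you should first fix the convention. Second, you say the family is symmetrized with $(\cdot,\cdot)_-$; the paper uses $(\cdot,\cdot)_+$, which has the advantage of being nondegenerate on all of $\calH$. Since $E$ avoids $\eps=\infty$, $(\cdot,\cdot)_-$ is indeed positive definite there, so this is not an error, but it is worth being explicit that you are restricting to the orthogonal complement of constants.
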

The meaning of analyticity of $h\mapsto u^{(i)}(h)$ will be explained in Section \ref{subsec:analytic}.
Note that $\udot$ in \eqref{eq:udot} is not unique, but the value of $\epsddot$ calculated from \eqref{eq:epsddot} is independent of the choice of $\udot$.
There is a similar inhomogeneous system determining $\uddot$, see \eqref{eqn1b}-\eqref{eqn3b}. In the proof of the theorem, the formulas for $\epsdot$, $\epsddot$ are obtained from the requirement that these systems have a solution.
The last statement of the theorem helps to identify which $u$ to use in \eqref{eq:epsdot} in case of degeneracy, i.e. $\dim E>1$.

As a side result we also obtain a perturbation formula for the Dirichlet-Neumann operator which may be of independent interest, see Theorem \ref{thm:DN-perturb}. This is sometimes called the shape derivative of the Dirichlet-Neumann operator. Recently there has been some interest in such results in the context of water waves, see \cite{lannes} for example.

\subsection*{Physics background and previous work}
The problem \eqref{I1}-\eqref{I4} with $n=3$ arises in the physics of surface plasmon polaritons. These are waves which couple electromagnetic fields to the
electron gas of a metal, and thus propagate along the interface of a metal
and a dielectric material or vacuum. 
As outlined in several recent reviews,
the particular interest in these excitations stems from the fact that by
altering the structure of the metal's surface, the properties of these
plasmons, including their interaction with light, can be specifically
tailored~\cite{Atwater07,BarnesEtAl03,Maier07,
PitarkeEtAl07,ZiaEtAl06}. This offers a high potential for developing new types of
photonic devices, combining the speed of photonics with the nanoscale
dimensions of electronics. Possible applications include, among others,
plasmonic chips, data storage, nanolithography, and highly sensitive
molecular detectors.

Our setting models a metallic particle whose shape is given by $\Omega$, surrounded by vacuum. The function $u$ is the potential of the electric field, and the equations arise from Maxwell's equations in the quasistatic approximation, which assumes constancy of the magnetic field, and is justified if the wavelength is much larger than the dimensions of the particle, which is the case for particles on the nano scale. 
See \cite[Chapter 7]{landau-lifschitz} for details. In the context of surface plasmons the quasistatic approximation has often been used in the literature, see \cite{KlimovGuzatov07b,KlimovGuzatov07a} for example, and its validity has been verified experimentally.

 The boundary conditions \eqref{I2}, \eqref{I3} are obtained from continuity of the tangential component of the electric field and of the normal component of the electric displacement, see \cite[Chapter 4.4]{griffiths}. The number $-\eps$ is the relative permittivity of the material of the particle, which  is coupled to the frequency by a law specific to the material. Thus, plasmonic eigenvalues correspond to frequencies of electromagnetic waves at which there is a strong excitation of the electron plasma in the metallic particle $\Omega$.
The name plasmonic eigenvalue was coined in  \cite{GrieserRuting09}, where also the asymptotics $\eps_k\to1$ was proved. The paper
 \cite{BieEtAl09} contains the first order perturbation formula \eqref{eq:epsdot}.
\medskip

The spectrum of the Laplacian on a bounded domain, and its relation to the geometry of the domain, has been studied intensively for many years, so it may be worth while to compare it to our problem. There are similarities, for example the existence of discrete spectrum  and the presence of the low energy and the high energy regime, which require very different mathematical techniques. There are also clear differences, for example the asymptotic behavior and the fact that plasmonic eigenvalues are scale invariant while the Laplacian eigenvalues are not.

\subsection*{Acknowledgement}
I am grateful to Gohar Harutyunyan for checking the perturbation formula \eqref{eq:epsddot} independently. 
\section{Basic observations}
\label{sec:basics}
\subsection{Relation to Dirichlet-Neumann operators; proofs of Theorems \ref{thm:existence+asymp} and \ref{thm:regularity}}
\label{subsec:dir-neu}
Theorems \ref{thm:existence+asymp} and \ref{thm:regularity} are proved by reformulating the plasmonic eigenvalue problem as an 'honest' eigenvalue problem for an operator on $C^\infty(\partial\Omega)$ which is built from Dirichlet-to-Neumann operators.
For basics on harmonic functions in $\Omega$ and $\R^n\setminus\Omegabar$, and for proofs of some of the facts stated below, see for example \cite{folland}. The behavior at infinity of harmonic functions may, for example, be analyzed by using a Fourier decomposition on large circles.

The (interior) Dirichlet-to-Neumann operator of $\Omega$ maps Dirichlet to Neumann data of a harmonic function. That is,
$$ \calN_- : \CinftydOmega \to \CinftydOmega,\quad g \mapsto \partial_n u_-$$
where $u_-$ is the unique solution of the Dirichlet problem $\Delta u_-=0$ in $\Omega$, $u_{-|\partial\Omega} = g$. Similarly, the exterior Dirichlet-to-Neumann operator is
$$ \calN_+ : \CinftydOmega \to \CinftydOmega,\quad g \mapsto \partial_n u_+$$
where $u_+$ is the unique solution of the Dirichlet problem $\Delta u_+=0$ in $\R^n\setminus\Omegabar$, $u_{+|\partial\Omega} = g$, $u_+(x)= O(|x|^{2-n})$ as $|x|\to\infty$. It is standard that this decay requirement already follows from boundedness of $u_+$.

$\calN_{\pm}$ are self-adjoint operators on $L^2(\partial\Omega)$ with the standard scalar product \eqref{eq:def scalar product dO}. The symmetry of $\calN_-$ follows immediately from Green's formula, and for $\calN_+$ can be reduced to that of $\calN_-$ by use of the Kelvin transform.

Note that while $\calN_-$ is a positive operator by \eqref{eq:ibp}, $\calN_+$ is negative because the normal $\bfn$ points into $\Ocomp$ rather than out.
\begin{proposition}\label{prop:dirichlet-neumann}
The plasmonic eigenvalue problem \eqref{I1}-\eqref{I4} is equivalent to the generalized eigenvalue problem
\begin{equation}
\label{eq:gen ev pr}
(\eps \calN_- + \calN_+) g = 0,
\end{equation}
in the sense that smooth solutions $(\eps,u)$ of \eqref{I1}-\eqref{I4} are in 1-1 correspondence with smooth solutions $(\eps,g)$ of \eqref{eq:gen ev pr}, via
$g = u_{|\partial\Omega}$.
\end{proposition}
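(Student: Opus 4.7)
The plan is to exhibit an explicit bijection between the two solution sets via the Dirichlet trace $g = u|_{\dO}$, using as the only nontrivial ingredient the unique solvability (and elliptic regularity up to the boundary) of the interior and exterior Dirichlet problems that underlie the very definitions of $\calN_\pm$.

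For the forward direction, given a smooth $(\eps, u)$ solving \eqref{I1}--\eqref{I4}, I would set $g := u_-|_{\dO} = u_+|_{\dO}$, the two sides being equal by \eqref{I2}. Since $u_-$ is harmonic in $\Omega$ with boundary trace $g$, the defining property of $\calN_-$ yields $\dn u_- = \calN_- g$. Likewise, $u_+$ is harmonic in $\Ocomp$ with boundary trace $g$ and decay \eqref{I4}, hence $\dn u_+ = \calN_+ g$. Substituting these two identities into \eqref{I3} produces $(\eps \calN_- + \calN_+) g = 0$, establishing one half of the correspondence.

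For the reverse direction, given a smooth $g$ with $(\eps \calN_- + \calN_+) g = 0$, I define $u_-$ as the unique solution of the interior Dirichlet problem with boundary data $g$, and $u_+$ as the unique solution of the exterior Dirichlet problem with data $g$ subject to the decay \eqref{I4}. By construction $u_\pm$ satisfy \eqref{I1}, \eqref{I2}, \eqref{I4}; elliptic regularity up to a smooth boundary makes them smooth from each side; and \eqref{I3} is immediate from the hypothesis together with the definition of $\calN_\pm$. The map $u \mapsto g$ is injective because $g$ together with \eqref{I1} and \eqref{I4} determines each of $u_-, u_+$ uniquely through its respective well-posed Dirichlet problem, and it is surjective by the reverse construction just described; so the correspondence is indeed 1--1.

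There is essentially no obstacle here: the content of the proposition is that Dirichlet traces already encode both $u_\pm$ via well-posed boundary value problems, and the jump condition \eqref{I3} translates directly into \eqref{eq:gen ev pr}. The only point requiring a line of care is the remark that the decay prescription $u_+(x) = O(|x|^{2-n})$ in \eqref{I4} is automatic from mere boundedness of $u_+$, so that $\calN_+ g$ is unambiguously defined on a given smooth trace $g$; this is standard for the exterior Dirichlet problem (via Kelvin transform or spherical harmonic expansion at infinity) and is already noted in the paragraph preceding the proposition. For completeness I would also treat the degenerate case $\eps=\infty$, where \eqref{eq:gen ev pr} should be read as $\calN_- g = 0$ with $\calN_+ g \not\equiv 0$, matching the interpretation of \eqref{I3} given in the introduction.
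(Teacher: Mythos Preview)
Your proof is correct and follows essentially the same approach as the paper: the paper simply observes that the map $u\mapsto u_{|\dO}$ is a bijection $\calH\to C^\infty(\dO)$ by unique solvability of the interior and exterior Dirichlet problems, and that under this bijection \eqref{I3} is equivalent to \eqref{eq:gen ev pr}. Your version spells out both directions and the bijectivity argument more explicitly, and adds the (optional) remarks on the decay condition and the $\eps=\infty$ case, but the underlying idea is identical.
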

\begin{proof}
By the previous discussion we have a 1-1 correspondance
\begin{equation}
\label{eq:1-1 corr}
\begin{aligned}
\calH & \longleftrightarrow
C^\infty(\dO) \\
u &\mapsto u_{|\dO} \notag
\end{aligned} 
\end{equation}
Under this correspondence \eqref{I3} is equivalent to \eqref{eq:gen ev pr}.
\end{proof}

For $g,g' \in C^\infty(\dO)$ we denote 
$$ 
 (g,g')_+ := - \langle g, \calN_+ g' \rangle
$$
If $u,u'\in\calH$ are the functions on $\R^n$ corresponding to $g,g'$ under the correspondence \eqref{eq:1-1 corr} then Green's formula, cf. \eqref{eq:ibp+}, gives 
\begin{equation}\label{eqn:isometry}
 (g,g')_+  = (u,u')_+  
\end{equation}
where the latter is defined in \eqref{eqn:scprod+}. This justifies the notation and shows that $(\cdot,\cdot)_+$ is a scalar product on $C^\infty(\dO)$. The latter fact also follows from the well-known fact that $\calN_+$ is invertible.

The problem \eqref{eq:gen ev pr} can easily be transformed into a standard eigenvalue problem: The plasmonic eigenvalues are the reciprocals of the eigenvalues of the operator
$$A=-\calN_+^{-1}\calN_-.$$
This operator is symmetric with respect to $(\cdot,\cdot)_+$. Introduce the Hilbert space
$$ \calHbar := \text{ completion of $C^\infty(\partial\Omega)$ with respect to $(\cdot,\cdot)_+$.}  $$

\begin{proof}
[Proof of Theorem \ref{thm:existence+asymp}]
Positivity of plasmonic eigenvalues follows from \eqref{eq:el-functional}: $R_p(u)$ must be non-zero since otherwise $u_+\equiv 0$, so $u_{-|\dO}=0$, hence $u\equiv 0$.

We use some concepts and facts from microlocal analysis. See \cite{folland} for background. It is well-known that $\calN_-,\calN_+$ are first order classical pseudodifferential operators on $C^\infty(\partial\Omega)$ with principal symbols
$$ \sigma(\calN_-) = |\xi|,\quad  \sigma(\calN_+) = -|\xi|. $$
The fact that $\calN_+$ is an invertible elliptic first order operator implies that $\calHbar=H^{1/2}(\partial\Omega)$, the Sobolev space of order $1/2$. By the standard theory of pseudodifferential operators,
$A=-\calN_+^{-1}\calN_-$ is a zeroth order pseudodifferential operator with principal symbol $-(-|\xi|^{-1}) \cdot |\xi|=1$, hence $A = I + R$ where $R$ is a pseudodifferential operator of order $-1$. Therefore, $R$ is a compact operator on $\calHbar$, and since $A$ is selfadjoint, so is $R$, so $R$ has real nonzero eigenvalues $r_1,r_2,\dots \to 0$ of finite multiplicity, and may have a zero eigenvalue $r_0=0$ of any multiplicity. The eigenspaces span $\calHbar$ and are orthogonal with respect to $(\cdot,\cdot)_+$. Then $a_k=1+r_k$ are the eigenvalues of $A$, so $a_k^{-1}$ are the plasmonic eigenvalues and $a_k^{-1}\to 1$ as $k\to\infty$. By \eqref{eqn:isometry} the plasmonic eigenspaces are orthogonal with respect to the scalar product $(\cdot,\cdot)_+$ on $\calH$. There is an eigenvalue $a_k=0$ which corresponds to the kernel of $\calN_-$, that is, the constants in $\calHbar$, and these correspond to the space $E_\infty$.
\end{proof}

\begin{proof}[Proof of Theorem  \ref{thm:regularity}]
Clearly, the correspondence in Proposition \ref{prop:dirichlet-neumann} also holds for $H^s$-plasmons. The operator $\eps\calN_-+\calN_+$ has principal symbol $(\eps-1)|\xi|$. If $\eps\neq 1$ then this is invertible for each $\xi\neq0$, hence the operator $\eps\calN_-+\calN_+$ is invertible then. So by elliptic regularity its kernel consists of smooth functions. These functions are the boundary values of the corresponding plasmons, so these are smooth up to the boundary by elliptic boundary regularity.
\end{proof}
\subsection{Variational principle}
\label{subsec:var princ}

\begin{proof}[Proof of Theorem \ref{thm:euler-lagrange}]

We calculate the first variation of $R_p$. Let $u\in \Hdot^1(\R^n)$ be such that $\int_\Omega |\nabla u|^2 >0$ and let $\eps=R_p(u)$. For $v\in C_0^\infty(\R^n)$, the space of smooth, compactly supported functions, we have
\begin{gather}
\frac{\delta R_p}{\delta u}(v)
= \frac 2 {\int_\Omega |\nabla u_-|^2} \left( \int_{\Omega^c}\nabla u_+\,\nabla v_+ - \eps \int_\Omega \nabla u_-\, \nabla v_- \right) 
\label{eqn:ibp varpr}\\
\qquad\qquad
= \frac 2 {\int_\Omega |\nabla u_-|^2} \left( \int_{\dO} (\partial_n u_+ + \eps \partial_n u_-) v - \int_{\Omega^c} (\Delta u_+) v_+ + \eps \int_{\Omega} (\Delta u_-) v_- \right) 
\label{eqn:ibp varpr2}
\end{gather}
using integration by parts.
If $u$ satisfies \eqref{I1}-\eqref{I3} then it follows that $\frac{\delta R_p}{\delta u}(v) =0$ for all $v\in C_0^\infty(\R^n)$. Now $C_0^\infty(\R^n)$ is dense in
$\Hdot^1(\R^n)$, and for $u$ satisfying in addition \eqref{I4} the right hand side of \eqref{eqn:ibp varpr} depends continuously on $v$, so $u$ is a critical point of $R_p$, and $R_p(u)=\eps>0$ by definition and Theorem \ref{thm:existence+asymp}.

Conversely, assume $u$ is a critical point of $R_p$ with $\eps=R_p(u)>0$.
Using smooth test functions $v$ localized near a point in $\Omega$ or the interior of $\Omega^c$ one sees from \eqref{eqn:ibp varpr}, \eqref{eqn:ibp varpr2} that $\Delta u=0$ on $\R^n\setminus \partial \Omega$, and then using $v$ localized near points on $\dO$ one gets $\eps \partial_n u_- + \partial_n u_+ =0$ at $\dO$. By the Sobolev restriction theorem we have
$u_-=u_+$ at $\dO$. Finally, a harmonic function $u$ on $\Omega^c$ with $\nabla u\in L^2(\Omega^c)$ satisfies \eqref{I4}, so $u$ is a plasmon with eigenvalue $\eps$.
\end{proof}
\ownremark{In terms of Dirichlet-Neumann operators, one can express the variational principle as follows: Let
$$ \Rtilde_p(g) = - \frac{\langle g, \calN_+ g\rangle} {\langle g, \calN_- g\rangle} $$
where $\langle\cdot,\cdot\rangle$ is the scalar product on $L^2(\dO)$.
Then ...
}

\subsection{Scale invariance}
\label{subsec:scale inv}
As opposed to the eigenvalue problem for the Laplacian, the plasmonic eigenvalue problem is scale invariant: If $t>0$ then the domains $\Omega$ and $t\Omega = \{tx:\, x\in\Omega\}$ have the same plasmonic eigenvalues, since an eigenfunction $u$  for $\Omega$ turns into an eigenfunction $u_t(x) := u(x/t)$ for $t\Omega$ with the same $\eps$.

\subsection{The two-dimensional case}
\label{subsec:2d}
The condition \eqref{I4} on the behavior of $u$ at infinity is natural for all dimensions $n\geq 2$ in that it gives unique solvability of the Dirichlet problem on $\Omega^c$ for any boundary data on $\dO$. However, this condition is not suitable for formulating an analogue of the plasmonic eigenvalue problem for $n=2$ because any constant function $u$ will satisfy \eqref{I1}-\eqref{I4} for {\em any} value of $\eps$. On the other hand, adding a constant to a plasmon $u$ with arbitrary plasmonic eigenvalue will yield a plasmon with the same eigenvalue. Therefore, it is natural to consider plasmons as elements of the quotient space $\calH/\{\text{constants}\}$. Since any bounded harmonic function on $\Omega^c\subset\R^2$ is of the form $\,\text{constant}+ O(|x|^{-1})$, this is equivalent to replace the decay condition \eqref{I4} by the condition
\begin{equation}\label{I4 n=2}
u_+(x) = O(|x|^{-1}) \quad \text{as}\quad |x|\to\infty \tag{$\ref{I4}_{n=2}$}  
\end{equation}
Denote the space of plasmons satisfying \eqref{I1}, \eqref{I2} and \eqref{I4 n=2} by $\calH'$. On this space $(\cdot,\cdot)_+$ is a scalar product, and Theorem \ref{thm:existence+asymp} holds, except that now we have the decomposition
$$ \calH' = \bigoplus_{\eps\in\R} E_\eps $$
and there is no space $E_\infty$. Theorems \ref{thm:regularity} and \ref{thm:euler-lagrange} hold as in $n\geq3$, with the same proofs.

In Section \ref{subsec:dir-neu} the definition and discussion of the Dirichlet-Neumann operators $\calN_\pm$ before Proposition \ref{prop:dirichlet-neumann}, and also Proposition \ref{prop:dirichlet-neumann}, hold for $n=2$ (where condition \eqref{I4} is used to define $\calN_+$). However, the operator $\calN_+$ is not invertible if $n=2$. More precisely, we have 
$$\ker\calN_+=\ker\calN_-=\{\text{constants}\} $$
and by self-adjointness both ranges are equal to the orthogonal complement  of the constants (with respect to the standard scalar product \eqref{eq:def scalar product dO}), which we denote by $\calR$.
Therefore, both $\calN_+$ and $\calN_-$ descend to invertible operators
on $C^\infty(\dO)/\{\text{constants}\}$, which may be identified with $\calR$. Then the plasmonic eigenvalues are again the reciprocals of the eigenvalues of $A=-\calN_+^{-1}\calN_-$ considered as operator $\calR\to\calR$ since 
the isomorphism $\calH\longleftrightarrow C^\infty(\dO)$ induces an isomorphism
$\calH/\{\text{constants}\} \longleftrightarrow C^\infty(\dO)/\{\text{constants}\}$. 
The operator $A$ is invertible, hence $\infty$ is not a plasmonic eigenvalue. The proof of Theorem \ref{thm:existence+asymp} carries over if one extends $A$ to $\calHbar$ by continuity and by setting it zero on constants.

There is a subtlety, which is the reason why quotient spaces are more natural in this discussion than their replacements $\calH'$ and $\calR$: The boundary values of harmonic functions on $\Omega^c$ satisfying condition \eqref{I4 n=2} are {\em not} the functions in $\calR$. Rather, they are characterized as follows. There is a unique harmonic function $v$ on $\Omega^c$ having boundary values $v_{|\dO}=0$ and satisfying
$v(x) \sim \frac1{2\pi} \log |x|$ as $|x|\to\infty$. This function may be constructed by choosing any point $y\in\Omega$ and letting $v=G-G'$ where $G(x)=\frac1{2\pi}\log|x-y|$ is the Newton potential centered at $y$ and $G'$ is the unique bounded harmonic function on $\Omega^c$ having the same boundary values on $\dO$ as $G$.  Let $g_0=\partial_\nu v\in C^\infty(\dO)$.
 Then we have:
\begin{quote}
 $g\in C^\infty(\dO)$ is the boundary value of a harmonic function on $\Omega^c$ satisfying \eqref{I4 n=2} if and only if $\langle g,g_0\rangle=0$.
\end{quote}
Necessity follows from Green's formula, and then sufficiency from a dimension argument.
The function $g_0$ is not constant, except when $\Omega$ is a disk. 
This means that the harmonic extension $u$ of $g\in\calR$ will not satisfy the bound \eqref{I4 n=2} in general. In order to obtain a plasmon satisfying \eqref{I4 n=2} one has to add a constant to $u$.
%

\section{Perturbation of plasmonic eigenvalues} \label{secperturb}
In this section we prove Theorem \ref{thm:perturbation}.
\subsection{Analyticity} \label{subsec:analytic}
The analytic dependence of the plasmonic eigenvalues on $h$ is proved by standard methods: First, translate into a problem on a fixed domain, then show that the operators involved depend analytically on $h$, then apply standard perturbation theory.

We provide the details:
In a first step, we translate the problem for $\Omega(h)$ to a ($h$-dependent) problem for the $h$-independent domain $\Omega$. To do this, we extend the vector field $a\bfn$ on $\dO$ to a smooth vector field $V$ on $\R^n$, which vanishes outside some small neighborhood $U$ of $\dO$. This can be done by first extending $a\bfn$ to a neighborhood of $\dO$ and then multiplying by a smooth function on $\R^n$ supported in this neighborhood and equal to one on $\dO$. Then define maps $T_h:\R^n\to\R^n, x\mapsto x+ h V(x)$. For any $h$, $T_h$ is the identity outside $U$, and by compactness of $\dO$ and the inverse function theorem it is a diffeomorphism on $\R^n$ for $h$ sufficiently close to zero. By definition, $T_h$ is a diffeomorphism $\Omega\to\Omega(h)$.

Now the map $T_h$  is used to 'transport'  data related to $\Omega(h)$ back to $\Omega$: Let $\geucl$ be the Euclidean metric on $\R^n$ and $\bf\bfg_h = T_h^* \geucl$ its pull-back. Denote by $\Delta_h$, $\nabla_h$ the Laplace-Beltrami and gradient operators for the Riemannian metric $\bf\bfg_h$ on $\R^n$ and by $\bfn_h$ the normal vector field with respect to $\bf\bfg_h$ on $\dO$. Then the plasmonic eigenvalue problem for $\Omega(h)$ is equivalent to the problem for $\Omega$ with $\Delta, \partial_n$ replaced by $\Delta_h, \partial_{n_h}$. 
More precisely, $u$ solves the former if and only if $T_h^*u:=u\circ T_h$ solves the latter.
Also, in  the discussion of the variational principle one needs to replace $\nabla$ by $\nabla_h$, and in Proposition \ref{prop:dirichlet-neumann} the Dirichlet-Neumann operators need to be replaced by those for the metric $\bf\bfg_h$, denoted $\calN_{\pm, h}$.

We can now state the precise meaning of analyticity in Theorem \ref{thm:perturbation}.
\begin{proposition} \label{prop:analytic}
 Let $\eps\neq 1$ be a finite plasmonic eigenvalue of $\Omega$ with eigenspace $E$. Then there are $h_0>0$ and real analytic functions
$$ \eps^{(i)}: (-h_0,h_0)\to \R,\quad u^{(i)}:(-h_0,h_0) \to \{u\in C(\R^n): \, u_{|\Omegabar}\text{ and }u_{|\R^n\setminus \Omega}\text{ are smooth}\} $$
for $i=1,\dots,\dim E$ such that $u^{(1)}(0),\dots,u^{(\dim E)}(0)$ form a basis of $E$ and the pairs $((T_h^{-1})^*u^{(i)}(h), \eps^{(i)}(h))$ satisfy the plasmonic eigenvalue problem for the domain $\Omega(h)$, for each $h\in(-h_0,h_0)$.
\end{proposition}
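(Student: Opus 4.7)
The plan is to reformulate via Proposition~\ref{prop:dirichlet-neumann} as an eigenvalue problem for the operator family $A_h := -\calN_{+,h}^{-1}\calN_{-,h}$ on the $h$-independent space $H^{1/2}(\dO)$, where $\calN_{\pm,h}$ are the interior and exterior Dirichlet-to-Neumann operators for the pulled-back metric $\bfg_h=T_h^*\geucl$ on the fixed domain $\Omega$. Exactly as in the proof of Theorem~\ref{thm:existence+asymp}, $A_h=I+R_h$ with $R_h$ a compact (order $-1$) pseudodifferential operator, so $a_0:=\eps^{-1}\neq 1$ is an isolated eigenvalue of $A_0$ of finite algebraic multiplicity; the whole problem reduces to showing that $h\mapsto A_h$ is a real-analytic family of bounded operators on $H^{1/2}(\dO)$. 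Granted this, Kato's analytic perturbation theorem for isolated eigenvalues of finite multiplicity yields analytic eigenvalue branches $a^{(i)}(h)$ and, using that $a_0$ is semisimple (because $A_0$ is self-adjoint on $\calHbar$ and $R_0$ is compact self-adjoint), an analytic basis of eigenvectors $g^{(i)}(h)\in H^{1/2}(\dO)$ for $\Ran P(h)$, where $P(h)$ is the Riesz spectral projector associated to the cluster of eigenvalues emerging from $a_0$. Setting $\eps^{(i)}(h):=1/a^{(i)}(h)$ and letting $u^{(i)}(h)$ be the $\bfg_h$-harmonic extension of $g^{(i)}(h)$ (interior, and exterior with decay \eqref{I4}) then produces the required branches after transporting via $T_h^{-1}$ onto $\Omega(h)$.

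The analytic dependence of $\calN_{\pm,h}$ on $h$, viewed as bounded operators $H^{1/2}(\dO)\to H^{-1/2}(\dO)$, rests on two observations. First, since $T_h(x)=x+hV(x)$ is polynomial in $h$, the pull-back metric $\bfg_h$, its inverse, the coefficients of $\Delta_h$, and the $\bfg_h$-unit normal on $\dO$ all depend analytically on $h$ near $h=0$. Second, the solution operators of the Dirichlet problems $\Delta_h u_\pm=0$ in $\Omega$ and in $\Omega^c$ with prescribed boundary data (and decay \eqref{I4} in the exterior) form analytic families of bounded operators $H^{1/2}(\dO)\to H^1(\Omega)$ and $H^{1/2}(\dO)\to\Hdot^1(\Omega^c)$, obtained by analytically inverting the uniformly elliptic boundary value problems via a Neumann series around the invertible $h=0$ case. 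Composing with the $\bfg_h$-normal derivative at $\dO$ (itself analytic in $h$) yields analyticity of $\calN_{\pm,h}$, and since $\calN_{+,0}$ is invertible the composition $A_h=-\calN_{+,h}^{-1}\calN_{-,h}$ is analytic on a neighborhood of $0$.

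The main technical hurdle is that the natural inner product $(\cdot,\cdot)_{+,h}$ on $\calHbar$ itself depends on $h$, while Kato's theorem is cleanest on a fixed Hilbert space. This is finessed by observing that the norms induced by $(\cdot,\cdot)_{+,h}$ and by the fixed $H^{1/2}(\dO)$ structure are equivalent uniformly for $|h|<h_0$, so $A_h$ may be treated as an analytic family of bounded (but no longer symmetric) operators on the fixed space $H^{1/2}(\dO)$, to which the general statement of Kato applies; reality of the $\eps^{(i)}(h)$ is then restored a posteriori from the $(\cdot,\cdot)_{+,h}$-self-adjointness of $A_h$. Finally, analyticity of $u^{(i)}(h)$ in the target Fr\'echet space follows by applying the analytic Dirichlet solution operator to the analytic boundary data $g^{(i)}(h)$, combined with elliptic boundary regularity (Theorem~\ref{thm:regularity}, which applies because $\eps^{(i)}(h)\neq 1$ for $h$ near $0$) to upgrade the $H^1$-analyticity of the extensions to smoothness of $u^{(i)}(h)$ up to $\dO$ from each side.
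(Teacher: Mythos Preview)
Your overall strategy coincides with the paper's: reduce to the boundary via $A_h=-\calN_{+,h}^{-1}\calN_{-,h}$, establish analyticity of this family, invoke Kato, then lift back by harmonic extension. The paper chooses the quadratic-form version of Kato (Theorem VII.4.2), showing directly that $h\mapsto\langle g,\calN_{\pm,h}g\rangle=\int|\nabla_h v_h|^2\,\dvol_h$ is analytic via the formula $v_h=v_0-G_h\Delta_h v_0$ with $G_h$ the analytic Dirichlet resolvent; this sidesteps your discussion of the $h$-dependent inner product, since form-analyticity plus a fixed form domain is exactly what that theorem needs. Your bounded-operator route is equally viable, and your remark that $A_h$ is $(\cdot,\cdot)_{+,h}$-self-adjoint is what ultimately forces the eigenvalue branches (after reduction to $\Ran P(h)$) to be real-analytic rather than merely Puiseux --- though you should say explicitly that on the finite-dimensional range one conjugates by $G(h)^{1/2}$, $G(h)$ the analytic Gram matrix, to obtain a genuinely symmetric analytic matrix family and then apply Rellich.

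There is one genuine gap in your last step. You obtain $g^{(i)}(h)$ analytic only with values in $H^{1/2}(\dO)$, hence $u^{(i)}(h)$ analytic only in $H^1$. Invoking Theorem~\ref{thm:regularity} gives smoothness of each individual $u^{(i)}(h)$, but says nothing about analytic dependence on $h$ in higher Sobolev norms; pointwise smoothness plus $H^1$-analyticity does not imply $H^s$-analyticity for $s>1$. The paper closes this gap by constructing, for each $s$, a left parametrix $Q_h$ of $P_h=\eps^{(i)}(h)\calN_{-,h}+\calN_{+,h}$ with remainder $R_h=Q_hP_h-I\in\Psi^{-s}(\dO)$, \emph{depending analytically on $h$}; then $g^{(i)}(h)=-R_h g^{(i)}(h)$ exhibits $g^{(i)}$ as analytic into $H^{s}(\dO)$, after which your analytic Dirichlet solution operator does the rest. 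You need either this argument or an equivalent (e.g.\ an analytic family of smoothing projectors onto the eigenspace) to complete the proof.
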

The analyticity of $u^{(i)}$ in the given space is to be understood as analyticity of the restrictions in the Hilbert spaces $H^s(\Omega)$ and $H^s(\Omega^c\cap\{|x|<R\})$ for each $R>0$ and $s\in\R$. 
\begin{proof}
We first consider the reduction of the problem given in and after Proposition \ref{prop:dirichlet-neumann}. Thus, we are dealing with an eigenvalue problem for the operator $A_h=-\calN_{+,h}^{-1}\calN_{-,h}$ on $L^2(\dO)$.
The analytic dependence of plasmonic eigenvalues and eigenfunctions, as elements of $L^2(\dO)$, follows from  standard perturbation theory, see \cite{Kat:PTLO}, if we can show that $h\mapsto \calN_{\pm,h}$ are analytic families of operators.  We show this for $\calN_h:=\calN_{-,h}$, the argument for $\calN_{+,h}$ is analogous. By \cite[Theorem VII.4.2]{Kat:PTLO} we only need to show that the associated family of quadratic forms, $\bfq_{h}(g) = \langle g, \calN_{h} g \rangle$, is analytic in $h$, in the sense that the form domain is independent of $h$ and that $h\to \bfq_{h}(g)$ is analytic for each $g$ in this domain. Now since $\calN_{h}$ is an elliptic pseudodifferential operator of order one, the domain of $\bfq_{h}$ is the Sobolev space $H^{1/2}(\dO)$ for each $h$. Also, for each $g\in H^{1/2}(\dO)$ we have (compare \eqref{eq:ibp})
$ \langle g,\calN_{h}g \rangle =  \int_{\Omega} |\nabla_h v_{h} |^2 \, \dvol_h
$
where $v_{h}\in H^1(\Omega)$ is the solution of the Dirichlet problem on $\Omega$ with boundary value $g$ (here we need to impose also the decay condition \eqref{I4} in the case of $\calN_{+,h}$). So we need to prove analyticity of $h\mapsto  \langle g,\calN_{h}g \rangle =  \int_{\Omega} |\nabla_h v_{h} |^2 \, \dvol_h
$.
We first prove analyticity of $h\mapsto v_h$: Since the metric $\bf\bfg_h$ depends analytically on $h$ (in fact, $\bfg_h(x)$ is given by the matrix $(I+h\, DV(x))^t (I+h\,DV(x))$ in standard coordinates), so does $\Delta_h$. Let $G_{h}$ be the inverse of $\Delta_h$ on $\Omega$ with Dirichlet boundary conditions at $\dO$. Then $G_{h}$ is analytic in $h$ by \cite[Theorem VII.1.3]{Kat:PTLO}. We have $v_{h} = v - G_h \Delta_h v$ where $v=v_{0}$ is the unperturbed solution. This shows that $v_h$ is analytic in $h$. Finally,  $\nabla_h v_h$ and $\dvol_h = \sqrt{\det \bf\bfg_h}\,dx$ are analytic in $h$, so we have proved the analytic dependence of $\calN_h$ on $h$.

Now the cited theorems give the existence of $\eps^{(i)}(h)$ and of branches $g^{(i)}(h)\in L^2(\dO)$, depending analytically on $h$. Then since $\eps^{(i)}(0)=\eps\neq 1$ we can choose $h_0$ small enough so that $\eps^{(i)}(h)\neq1$ for all $h\in(-h_0,h_0)$. Then the standard parametrix construction  gives, for any $s\in\N$, a left parametrix  $Q_h$ of 
 $P_h=\eps^{(i)}(h)\calN_{-,h} + \calN_{+,h}$ with error term $R_h=Q_hP_h-I\in\Psi^{-s}(\dO)$, with $Q_h$ depending on $h$ analytically. Then $0 = Q_hP_hg^{(i)}(h) = g^{(i)}(h) + R_hg^{(i)}(h)$, and this shows that $g^{(i)}(h)$ is analytic with values in $H^s(\dO)$. Finally, this implies that $u^{(i)}(h)$, defined as the solution of the Dirichlet problem with respect to $\Delta_h$ with boundary values $g^{(i)}(h)$, depends analytically on $h$ in any Sobolev space.
\end{proof}

\subsection{The perturbation calculation} \label{subsec:perturb calc}
\subsubsection{Procedure}
We first explain our procedure.
Consider \eqref{I1}-\eqref{I4} with all quantities depending on $h$. Differentiate these equations in $h$ and set $h=0$. This yields a system of equations
\begin{align}
\Delta \udot & = 0 \quad\text{in }\R^3\setminus\partial\Omega
\label{eqn1a}\\
\udot_- - \udot_+ & = F_1 \quad\text{on }\partial\Omega
\label{eqn2a}\\
\eps \partial_n \udot_- + \partial_n \udot_+ & = -\epsdot \partial_n u_- + G_1 \quad\text{on } \partial\Omega
\label{eqn3a}
\end{align}
satisfied by $\udot$,
where $F_1,G_1$ are explicit expressions in $u$ and $\eps$, see \eqref{eq:F},\eqref{eq:G}. This is an inhomogeneous version of the system \eqref{I1}-\eqref{I4}. Since the homogeneous system has a nontrivial solution (the given $u$), solvability of the inhomogeneous system implies certain  compatibility conditions for the right hand sides $F_1$, $-\epsdot \partial_n + G_1 $, see Lemma \ref{lem:compat}. This yields a formula for $\epsdot$.
Similarly, differentiating twice in $h$ before setting $h=0$ gives
\begin{align}
\Delta \uddot & = 0 \quad\text{in }\R^3\setminus\partial\Omega
\label{eqn1b}\\
\uddot_-- \uddot_+ & = F_2 \quad\text{on }\partial\Omega
\label{eqn2b}\\
\eps \partial_n \uddot_- + \partial_n \uddot_+ & = -\epsddot \partial_n u_- + G_2 \quad\text{on } \partial\Omega
\label{eqn3b}
\end{align}
with $F_2,G_2$ given in \eqref{eq:F_2}, \eqref{eq:G_2}, and in the same way as before this yields the formula for $\epsddot$.

\subsubsection{Calculation of derivatives at $\dO$}
In order to determine $F_1,G_1,F_2,G_2$ in \eqref{eqn2a}-\eqref{eqn3b}, we need to calculate certain derivatives at the boundary. It is important  to express these in invariant differential geometric terms. For a point $p\in\dO$ we decompose orthogonally  $\R^3=T_p\R^3 = T_p \dO \oplus \Span{\bfn(p)}$ and express all quantities with respect to this decomposition. For example, for a function $u$ we have
$$ \nabla u = \nablad u + (\partial_n u ) \bfn .$$

Denote by $\bfn(q,h)$ the unit normal to $\dO_h$ at $q\in\dO_h$.
Recall that $K$, $H$, $W$ denote the Gauss  curvature, mean curvature and Weingarten map of $\dO$, respectively. Let $\Deltad = \divd \nablad$ be the Laplace-Beltrami operator on $\dO$. The definitions of these and the other quantities is recalled in the proof of Lemma \ref{lem:geom dO in Koords u,v}. 
\begin{proposition}\label{prop:deriv at dO}
Fix $p\in\dO$. Denote by $\bfndot(p)$, $\bfnddot(p)$ the first and second derivative in $h$, evaluated at $h=0$, of the unit normal $\bfn(q_h,h)$ at  $q_h= p + h a(p)\bfn(p) \in\dO_h$.
Then we have at $p$
\begin{align*}
\dot{\boldsymbol{n}}  & =-\nablad a  \\
\ddot{\boldsymbol{n}} & =-W(\nablad a^2) -|\nablad a |^2 \boldsymbol{n} 
\end{align*}
Furthermore, if $x\mapsto u(x,h)$ is harmonic on $\Omega_h$ for each $h$, smooth in $(x,h)$ up to the boundary, then $h\mapsto \nabla u(q_h,h)$ has first and second derivatives at $h=0$
\begin{align*}
\big(\nabla u\big)^{\cdot} &= a(T_1 u + T_1' \partial_n u) && + a(N_1 u + N_1' \dn u)\,\bfn && + \nabla \udot \\
\big(\nabla u\big)^{\cdot\cdot}&= a^2(T_2 u + T_2' \dn u) && + a^2(N_2 u + N_2' \dn u)\,\bfn
&& + 2 (\nabla \udot)^{\cdot} - \nabla \uddot
\end{align*}
where
\begin{align*}
T_1 &=  W \nablad, & T_1' &=  \nablad, \\
N_1 &= -\Deltad, &   N_1' &= 2H, \\
N_2 &= 2 ( -H\Deltad-\divd W \nablad + \nablad H\cdot \nablad),
& N_2' &= -\Deltad+8H^2-2K.
\end{align*}
\end{proposition}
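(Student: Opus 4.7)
The derivation of $\dot{\bfn},\ddot{\bfn}$ proceeds by parametrizing. Choose local coordinates $X:U\subset\R^2\to\partial\Omega$ near $p$. Then $X_h(s):=X(s)+ha(s)\bfn(s)$ parametrizes $\partial\Omega(h)$, and the unit normal $\bfn_h:=\bfn(X_h(s),h)$ is determined by $\bfn_h\cdot\bfn_h=1$ and $\bfn_h\cdot\partial_\alpha X_h=0$. Differentiating each identity once in $h$ at $h=0$: the first gives $\dot{\bfn}\cdot\bfn=0$, so $\dot{\bfn}$ is tangential; the second, combined with $\bfn\cdot\partial_\alpha\bfn=0$, yields $\dot{\bfn}\cdot\partial_\alpha X=-\partial_\alpha a$, which identifies $\dot{\bfn}=-\nablad a$. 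Differentiating a second time and using $\partial_h^2 X_h=0$ together with the Weingarten equation $\partial_\alpha\bfn=-W(\partial_\alpha X)$ gives $\ddot{\bfn}\cdot\bfn=-|\dot{\bfn}|^2=-|\nablad a|^2$ and $\ddot{\bfn}\cdot\partial_\alpha X=-2a\,\dot{\bfn}\cdot\partial_\alpha\bfn=-2a\,W(\nablad a)\cdot\partial_\alpha X$; using $2a\nablad a=\nablad(a^2)$ produces the stated expression for $\ddot{\bfn}$.

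For the derivatives of $\nabla u$ at the moving point $q_h$, the key observation is that $\dot q_h=a(p)\bfn(p)$ is a \emph{constant} vector in $h$. Taylor expansion in flat Cartesian coordinates gives, at $h=0$,
$(\nabla u)^{\cdot}(p)=a(\bfn\cdot\nabla)\nabla u(p)+\nabla\dot u(p)$ and $(\nabla u)^{\cdot\cdot}(p)=a^2(\bfn\cdot\nabla)^2\nabla u(p)+2a(\bfn\cdot\nabla)\nabla\dot u(p)+\nabla\ddot u(p)$. Applying the first formula with $u$ replaced by $\dot u$ (which carries no explicit $h$-dependence) identifies the middle term of the second display with $2(\nabla\dot u)^{\cdot}(p)$. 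The task reduces to expressing the Cartesian quantities $(\bfn\cdot\nabla)^k\nabla u|_p$ in intrinsic terms on $\partial\Omega$ for $k=1,2$.

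I would do this in tubular coordinates $x=X(s)+t\bfn(s)$ near $p$. Because the tubular curve through $p$ is the Euclidean straight line $p+t\bfn(p)$, one has $(\bfn\cdot\nabla)^k\nabla u|_p=\partial_t^k(\nabla u)|_{(s(p),0)}$. With $\tilde u(s,t):=u(X(s)+t\bfn(s))$ and $g_{\alpha\beta}(s,t)$ the metric of the level surface $\{t=\text{const}\}$, one has $\nabla u=g^{\alpha\beta}(s,t)(\partial_\alpha\tilde u)(\partial_\beta X+t\partial_\beta\bfn)+(\partial_t\tilde u)\bfn(s)$. Differentiating in $t$ at $t=0$ and using $\partial_t g_{\alpha\beta}|_{t=0}=-2\,\mathrm{II}_{\alpha\beta}$ together with $\partial_\beta\bfn=-W\partial_\beta X$, a cancellation between two Weingarten contributions leaves the tangential part equal to $W\nablad u+\nablad(\dn u)$, while the normal part is $\partial_t^2\tilde u|_{t=0}$. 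Writing $\Delta u=\partial_t^2\tilde u+(\partial_t\ln\sqrt g)\partial_t\tilde u+\Delta_{\partial\Omega(t)}\tilde u$ and using $\partial_t\ln\sqrt g|_{t=0}=-2H$, harmonicity yields $\partial_t^2\tilde u|_{t=0}=-\Deltad u+2H\,\dn u$, reproducing $T_1=W\nablad,\ T_1'=\nablad,\ N_1=-\Deltad,\ N_1'=2H$. The second-order formulas are obtained by iterating the procedure: expand the tubular expression for $\nabla u$ to order $t^2$ using $\partial_t^2 g_{\alpha\beta}|_{t=0}=2\langle\partial_\alpha\bfn,\partial_\beta\bfn\rangle$, and eliminate $\partial_t^3\tilde u|_{t=0}$ by differentiating the tubular Laplacian equation once more in $t$ at $t=0$.

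The main obstacle is the bookkeeping in the second-order expansion. The form $N_2'=-\Deltad+8H^2-2K$ combines contributions from $\partial_t^2\ln\sqrt g|_{t=0}$ (which brings in both $\operatorname{tr}(W^2)=4H^2-2K$ and $(\partial_t\ln\sqrt g|_{t=0})^2=4H^2$) with the commutator of $\partial_t$ and $\Delta_{\partial\Omega(t)}$, and $N_2=-2(H\Deltad+\divd W\nablad-\nablad H\cdot\nablad)$ encodes the $t$-variation of the surface Laplacian. Keeping the signs consistent with the outward-normal convention and ensuring that every third- and fourth-order $x$-derivative of $u$ is eliminated by iterated use of $\Delta u=0$ is the principal technical point; organizing the calculation at $p$ in an orthonormal tangent frame diagonalizing $W$ noticeably simplifies the $W^2$ terms.
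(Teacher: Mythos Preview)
Your strategy is the paper's: tubular coordinates together with elimination of higher normal derivatives of $u$ via $\Delta u=0$. Your derivation of $\dot\bfn,\ddot\bfn$ by differentiating the defining relations $|\bfn_h|^2=1$ and $\bfn_h\cdot\partial_\alpha X_h=0$ is somewhat slicker than the paper's route, which writes $\partial\Omega$ locally as a graph $(v,w,f(v,w))$, forms the cross product $Y_v\times Y_w$ explicitly for the deformed surface, and Taylor-expands its normalization in $h$. For $(\nabla u)^{\cdot}$ and $(\nabla u)^{\cdot\cdot}$ the paper carries out exactly the tubular computation you outline, but again concretely in those graph coordinates: it first establishes a dictionary between Cartesian $\partial_{X_i}$-derivatives and $(v,w,s)$-derivatives at $p$, then derives a lemma expressing $(\bfn\cdot\nabla)^{2}u$ and $(\bfn\cdot\nabla)^{3}u$ intrinsically from $\Delta u=0$. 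Your organization via $\partial_t g_{\alpha\beta}|_{t=0}=-2\,\mathrm{II}_{\alpha\beta}$ and $\partial_t\ln\sqrt g|_{t=0}=-2H$ is an equivalent, more invariant packaging of the same bookkeeping.

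One point does need correction. In the proposition $\udot$ denotes the $h$-dependent function $\udot(\cdot,h)=\partial_h u(\cdot,h)$, not its value at $h=0$; indeed the paper remarks right after the statement that $(\nabla\udot)^{\cdot}$ is to be computed by the same rule as $(\nabla u)^{\cdot}$, which gives $(\nabla\udot)^{\cdot}=a(\bfn\cdot\nabla)\nabla\udot+\nabla\uddot$. With that reading one has $2(\nabla\udot)^{\cdot}-\nabla\uddot=2a(\bfn\cdot\nabla)\nabla\udot+\nabla\uddot$, which matches your Taylor expansion. Your parenthetical ``which carries no explicit $h$-dependence'' is therefore inconsistent with the stated formula: under your reading the last two terms would come out as $2(\nabla\dot u)^{\cdot}+\nabla\ddot u$, with the wrong sign on $\nabla\uddot$. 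The underlying expansion is fine; only the repackaging into the proposition's form needs fixing. Finally, your second-order normal computation (the identification of $N_2,N_2'$) is only sketched; the paper completes it by exactly the $\partial_t$-differentiation of the tubular Laplacian identity that you describe.
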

\noindent Here the operators $T_i,T_i', N_i, N_i'$ act on functions defined on $\dO$.

The tangential component of $\big(\nabla u\big)^{\cdot\cdot}$ is not needed, so we don't calculate it.
The term $(\nabla\udot)^{\cdot}$ can be calculated using the same formula as for $(\nabla´u)^{\cdot}$ since $\udot(h)$ is also harmonic for each $h$.
The same conclusion holds if $u(\cdot,h)$ is harmonic on $\R^n\setminus\overline{\Omega_h}$ for each $h$.

The main point in the formulas for the variation of $\nabla u$ is that they involve at most first derivatives in the normal direction. The second and third normal derivatives, which occur in calculating the left hand side, are expressed in terms of tangential derivatives and curvature terms, using the harmonicity of $u$. See Lemma \ref{lemma2}.

\begin{proof}
Fix $p\in \partial\Omega.$ We introduce coordinates in $\R^3$ so that $p=0$
and that the tangent plane to $\partial\Omega$ in $p$,
$T_p\partial\Omega$, is horizontal and $\Omega$ lies below it. Then we can parametrize the
surface $\dO$ near $p$ as a graph:
\begin{equation}\label{1.1}
x(v,w)=(v,w,f(v,w)),
\end{equation}
so that, near $p$, $\partial\Omega=\{x(v,w): v,w\quad \text{near}\quad 0\}$, and $f=0$, $\nabla f=0$ at $(0,0)$.
 We may rotate coordinates so that $f$ has the form
\begin{equation}\label{1.2}
f(v,w)=\frac{\lambda}{2}v^2+\frac{\mu}{2}w^2+{\calO}^3
\end{equation}
for some $\lambda,\mu\in\R,$ where here und in the sequel ${\calO}^k$ denotes any quantity vanishing at least to order $k$ at
$v=w=0.$ Geometrically, $\lambda,\mu$ are the principal curvatures of $\partial\Omega$ at $p$.

Let $\boldsymbol{n}=\boldsymbol{n}(v,w)$ be the upward unit normal:
\begin{equation}
\label{eq:normal}
\boldsymbol{n}(v,w)=\frac{(-f_v(v,w),
-f_w(v,w),1)}{\sqrt{1+f_v^2(v,w)+f_w^2(v,w)}}.
\end{equation}
Expanding this to second order we get
\begin{eqnarray}
\boldsymbol{n}(0,0)&=&(0,0,1),\nonumber\\
\boldsymbol{n}_v(0,0)&=&(-\lambda,0,0),\label{1.3}\\
\boldsymbol{n}_{vv}(0,0)&=&(-f_{vvv},-f_{vvw},-\lambda^2)\nonumber.
\end{eqnarray}
and analogously with $v,w$ interchanged and $\lambda$ replaced by $\mu$.
Here and in the sequel derivatives of $f$ are evaluated at $(0,0)$ if no arguments are given.

We now express geometric quantities of $\dO$ in the coordinates $v,w$.
\begin{lemma}
\label{lem:geom dO in Koords u,v}
For the metric tensor $g$, second fundamental form $II$ and Weingarten map $W$ we have in the coordinates $v,w$
\begin{align}
g &= \Id + \calO^2
\label{1.5} \\
\label{1.6}
II &=W+\calO^2 =
\left( \begin{array}{cc} \displaystyle{\lambda}& 0\\
 0 & \displaystyle{\mu}
\end{array}\right)
+
\left( \begin{array}{cc} \displaystyle{v f_{vvv}+w
f_{vvw}}& \displaystyle{v f_{vvw}+w f_{vww}}\\
\displaystyle{v f_{vvw}+w f_{vww}}& \displaystyle{v
f_{wwv}+w f_{www}}
\end{array}\right)+{\calO}^2
\end{align}
Here $\Id$ is the identity matrix. The derivatives of $f$ are evaluated at the origin.

The intrinsic differential operators in the surface are
\begin{equation}
\label{eq:div nabla Laplace}
\nablad=\nabla+ \calO^2,\ \divd= \div + \calO^1,\ \Deltad = \Delta + \calO^1
\end{equation}
where $\nabla,\div,\Delta$ are the standard (Euclidean) operators in $v,w$.
\end{lemma}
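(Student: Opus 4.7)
The plan is to read off all the quantities directly from the explicit graph parametrization \eqref{1.1}, using the fact that $f$ vanishes to second order at the origin, so that $f_v,f_w=\calO^1$, $f_{vv}-\lambda=\calO^1$, $f_{ww}-\mu=\calO^1$, and $f_{vw}=\calO^1$, while all higher derivatives evaluated at the origin are constants whose Taylor expansion I only need through the appropriate order.

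First I would compute the induced metric. From $\partial_v x=(1,0,f_v)$ and $\partial_w x=(0,1,f_w)$ one has
\[
g_{ij} = \delta_{ij} + f_i f_j,
\]
and since the products $f_i f_j$ are $\calO^2$, \eqref{1.5} follows. From the same identity, $\det g = 1+|\nabla f|^2 = 1+\calO^2$, so $\sqrt{\det g}=1+\calO^2$ and $g^{-1} = \Id+\calO^2$.

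Next I would compute the second fundamental form. Since $\partial_i\partial_j x = (0,0,f_{ij})$, and by \eqref{eq:normal} the third component of $\bfn$ is $(1+|\nabla f|^2)^{-1/2} = 1+\calO^2$, one obtains
\[
II_{ij} = f_{ij}(v,w)\,\bigl(1+\calO^2\bigr).
\]
Taylor expansion of $f$ through third order gives $f_{vv}(v,w)=\lambda + vf_{vvv}+w f_{vvw}+\calO^2$ (derivatives at $0$), and similarly for $f_{vw}$ and $f_{ww}$; here $f_{vw}$ has no constant term because of the diagonal form \eqref{1.2}. Multiplying by $1+\calO^2$ does not affect the constant or linear terms modulo $\calO^2$, which gives \eqref{1.6} for $II$. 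The equality $W=II+\calO^2$ then follows from $W=g^{-1}II$ together with $g^{-1}=\Id+\calO^2$ and $II=O(1)$.

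Finally, for the intrinsic operators I would use the standard coordinate formulas $\nablad u = g^{ij}\partial_j u\,\partial_i$, $\divd V = \tfrac{1}{\sqrt{\det g}}\partial_i(\sqrt{\det g}\,V^i)$, $\Deltad = \divd\nablad$. The first gives $\nablad u - \nabla u = (g^{ij}-\delta^{ij})\partial_j u = \calO^2\cdot\nabla u$, i.e.\ the first identity of \eqref{eq:div nabla Laplace}. For the divergence, expanding yields
\[
\divd V = \partial_i V^i + \frac{\partial_i\sqrt{\det g}}{\sqrt{\det g}}\,V^i = \div V + \calO^1\cdot V,
\]
where $\partial_i\sqrt{\det g}=\calO^1$ since $\sqrt{\det g}=1+\calO^2$. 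Composing gives $\Deltad = \Delta + \calO^1\cdot\nabla + \calO^2\cdot\nabla^2 = \Delta+\calO^1$ as an operator.

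The proof is essentially bookkeeping of Taylor orders; the only point that requires attention is that I must track how the $\calO^k$ corrections to the coefficients translate into the claimed $\calO^k$ corrections to the \emph{operators}, which is the reason the divergence and Laplacian only agree with their Euclidean counterparts modulo $\calO^1$ rather than $\calO^2$: the derivative $\partial_i$ lowers the vanishing order of $\sqrt{\det g}-1$ from $2$ to $1$.
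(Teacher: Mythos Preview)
Your proof is correct and follows essentially the same approach as the paper: compute $g$, $II$, $W$ directly from the graph parametrization, use $g^{-1}=\Id+\calO^2$ and $\sqrt{\det g}=1+\calO^2$, and then read off the operator identities from the standard coordinate formulas. Your closing remark explaining why differentiating $\sqrt{\det g}$ drops the order from $\calO^2$ to $\calO^1$ is a nice clarification that the paper leaves implicit.
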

We note the following consequences of the lemma, needed below: The mean curvature is
\[
H=\frac{1}{2}\text{trace}\,W=\frac{1}{2}(\lambda+\mu+v\Delta f_v+w\Delta f_w)+{\calO}^2.
\]
and then simple calculations give
\begin{equation}
\label{1.10}
\divd W \nablad = \lambda \partial_v^2+\mu \partial_w^2 +\nabla (\Delta f)\cdot \nabla +\calO^1
\end{equation}
and
\begin{equation}\label{1.11}
\nablad H= \frac{1}{2}\nabla\Delta f + \calO^1.
\end{equation}

\begin{proof}
For the metric tensor $g$ we have
\begin{eqnarray}
g(v,w)&=& \left(
\begin{array}{cc}
\displaystyle{g_{11}}&\displaystyle{g_{12}}\\
\displaystyle{g_{21}}&\displaystyle{g_{22}}
\end{array}
\right)
 =\left(
\begin{array}{cc}
\displaystyle{\partial_v x\cdot\partial_v x}&\displaystyle{\partial_v x\cdot\partial_w x}\\
\displaystyle{ \partial_w x\cdot\partial_v
x}&\displaystyle{ \partial_w x\cdot\partial_w x}
\end{array}
\right)\nonumber\\
&=&\left(
\begin{array}{cc}
\displaystyle{1+f_v^2(v,w)}&\displaystyle{f_v(v,w) f_w(v,w)}\\
\displaystyle{f_v(v,w) f_w(v,w)}&\displaystyle{1+f_w^2(v,w)}
\end{array}\right) = \Id + \calO^2\nonumber
\end{eqnarray}

The second fundamental form, expressing curvature, is defined by
\[
II=\left(
\begin{array}{cc}
\displaystyle{II_{11}}&\displaystyle{II_{12}}\\
\displaystyle{II_{21}}&\displaystyle{II_{22}}
\end{array}
\right)=\left(\begin{array}{cc}
\displaystyle{\boldsymbol{n}(v,w)\cdot \partial_v^2 x(v,w)}&\displaystyle{\boldsymbol{n}(v,w)\cdot \partial_{vw}^2 x(v,w)}\\
\displaystyle{\boldsymbol{n}(v,w)\cdot \partial_{vw}^2
x(v,w)}&\displaystyle{\boldsymbol{n}(v,w)\cdot \partial_w^2 x(v,w)}
\end{array}\right),
\]
 where
\begin{eqnarray*}
\boldsymbol{n}(v,w)\cdot \partial_v^2
x(v,w)&=&\frac{(-f_v(v,w),-f_w(v,w),1)}{\sqrt{1+f_v^2(v,w)+f^2_w(v,w)}}\cdot
(0,0,f_{vv}(v,w))=\frac{f_{vv}(v,w)}{\sqrt{1+f_v^2+f_w^2}}\\
&=&f_{vv}(v,w)+{\calO}^2,\\
\boldsymbol{n}(v,w)\cdot \partial_{vw}^2 x(v,w)&=&f_{vw}(v,w)+{\calO}^2,\\
\boldsymbol{n}(v,w)\cdot \partial_w^2 x(v,w)&=&f_{ww}(v,w)+{\calO}^2.
\end{eqnarray*}
Expanding to first order in Taylor series gives the formula for $II$ in \eqref{1.6}.

The Weingarten map (or shape operator), again expressing curvature, has components
\[
W_i^j=\sum\limits_{k=1}^2 II_{ik}g^{kj},\quad i,j=1,2,
\]
where $\Big(g^{kj}\Big)_{k,j=1,2}$ is the inverse matrix of $g$. From (\ref{1.5}) it follows that $g^{-1}=\Id+{\calO}^2$ and hence
$W=II\cdot g^{-1}=II+{\calO}^2$, so \eqref{1.6} is proved.

The gradient of $u$ in the surface is
\begin{equation*} 
\nablad u=\left(
\begin{array}{cc}
\displaystyle{g^{11}}&\displaystyle{g^{12}}\\
\displaystyle{g^{21}}&\displaystyle{g^{22}}
\end{array}
\right)\left(
\begin{array}{c}
u_v\\
u_w
\end{array}
\right)=\Big( \Id+{\calO}^2\Big)\left( \begin{array}{c}
u_v\\
u_w
\end{array}\right)=\left( \begin{array}{c}
u_v\\
u_w
\end{array}\right)+{\calO}^2.
\end{equation*}

Next, $g=\Id+\calO^2$ implies $\sqrt{\det g}=1+\calO^2$, so for a vector field $Z=(Z_v, Z_w)$  the divergence in the surface is
\begin{eqnarray*}
\divd Z&=&\frac{1}{\sqrt{\text{det}\, g}}\Big(\partial_v(\sqrt{\text{det}\,g}Z_v)+\partial_w(\sqrt{\text{det}\,g}Z_w)\Big)\\
&=&\partial_v Z_v+\partial_w Z_w+{\calO}^1,
\end{eqnarray*}
and this gives also the result for the Laplace-Beltrami operator, $\Deltad=\divd\nablad$.
\end{proof}

We now prove the formulas for $\bfndot,\bfnddot$ in Proposition \ref{prop:deriv at dO}. We parametrize $\partial\Omega_h$ by
$$Y(v,w,h)=x(v,w)+ha(v,w)\boldsymbol{n}(v,w)$$
with $x(v,w)$ as in \eqref{1.1}. For $p=0$ we have $q_h=Y(0,0,h)$, and $N(h)=Y_v(0,0,h)\times Y_w(0,0,h)$ is a non-normalized normal to $\dO_h$ at this point.
From \eqref{1.3} we have 
$Y_v(0,0,h)=\big(1-a\lambda h,0,h a_v\big) $ and analogously
$Y_w(0,0,h)=\big(0,1-a\mu h,h a_w\big)$, so (with $a$ and its derivatives taken at $(0,0)$)
\begin{align*}
N(h) &= (-a_v h+aa_v\mu h^2,\, -a_w h+aa_w\lambda  h^2,
1-a(\mu+\lambda)h+a^2\lambda\mu h^2) \\
&=A+Bh+Ch^2
\end{align*}
where $A=(0,0,1),\ B=(-a_v,-a_w,-2H a),\ C=(a a_v\mu,aa_w\lambda,a^2 K)$.
Expanding $\bfn(q_h,h)=\frac{N(h)}{||N(h)||}$ in Taylor series results in
$\bfn(q_h,h) =$
$$
A+\Big(B-A(A\cdot B)\Big)h+\left( C-B(A\cdot B)-A\Big(\frac{||B||^2}{2}+A\cdot C-\frac{3}{2}(A\cdot B)^2\Big)\right)h^2+O(h^3).
$$
The coefficient of $h$ is $(-a_v,-a_w,0)=-\nablad a $ by \eqref{eq:div nabla Laplace}, and the coefficient of $h^2/2$ is
$(-2aa_v\lambda,-2a a_w\mu,-{|\nabla a|^2}) =  -W(\nablad a^2)-|\nablad a|^2\boldsymbol{n}$ by \eqref{1.6} and \eqref{eq:div nabla Laplace}, which proves the first part of Proposition \ref{prop:deriv at dO}.

To prove the second part of Proposition \ref{prop:deriv at dO} it is convenient to use coordinates normal with respect to $\dO$, near $p=0$. That is, we parametrize a point $X=(X_1,X_2,X_3)$ near $p$ by coordinates $(v,w,s)$ where $s$ is the signed distance from $X$ to $\dO$ and $x(v,w)$ is the point on $\dO$ closest to $X$. That is,
\[
X(v,w,s)=x(v,w)+s\boldsymbol{n}(v,w).
\]
\begin{lemma}\label{lemma1}
The derivatives in the coordinates $X_1, X_2, X_3$ and $v,w,s$ at the point $p=0$ are related as follows:
\begin{eqnarray}
&&\partial_{X_1}=\partial_v,\quad \partial_{X_2}=\partial_w,\quad
\partial_{X_3}=\partial_s;\nonumber\\
&&\partial^2_{X_1}=\partial_v^2-\lambda\partial_s,\quad
\partial_{X_2}^2=\partial_w^2-\mu\partial_s,\quad
\partial^2_{X_3}=\partial_s^2;\nonumber\\
&&\partial_{X_1}\partial_{X_3}=\partial_v\partial_s+\lambda\partial_v,\quad
\partial_{X_2}\partial_{X_3}=\partial_w\partial_s+\mu\partial_w;\label{1.4}\\
&&\partial^2_{X_1}\partial_{X_3}=\partial^2_v\partial_s-\lambda
\partial_s^2+2\lambda\partial^2_v-\lambda^2\partial_s+f_{vvv}\cdot
\partial_v+f_{vvw}\cdot \partial_w;\nonumber\\
&&\partial^2_{X_2}\partial_{X_3}=\partial^2_w\partial_s-\mu
\partial_s^2+2\mu\partial^2_w-\mu^2\partial_s+f_{www}\cdot
\partial_w+f_{wwv}\cdot \partial_v\nonumber.
\end{eqnarray}
\end{lemma}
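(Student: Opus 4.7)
The statement is a purely local coordinate identity at $p=0$ relating the two parametrizations of $\R^3$: the Cartesian coordinates $X=(X_1,X_2,X_3)$ and the normal coordinates $(v,w,s)$ defined by $\Phi(v,w,s) = x(v,w) + s\,\bfn(v,w)$. My plan is to apply the chain rule systematically up to third order, with the derivatives of $\Phi$ and its inverse $\Psi = \Phi^{-1}$ at $(0,0,0)$ read off from \eqref{1.1}, \eqref{1.2}, and \eqref{1.3}.

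\textbf{First order.} Using $\nabla f(0,0)=0$, $\bfn(0,0)=(0,0,1)$, and $x_v(0,0)=(1,0,0)$, $x_w(0,0)=(0,1,0)$, the Jacobian $d\Phi(0)$ is the identity matrix. Hence $d\Psi(0)$ is also the identity, and the chain rule immediately gives the first line $\partial_{X_1}=\partial_v,\ \partial_{X_2}=\partial_w,\ \partial_{X_3}=\partial_s$ at $p$.

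\textbf{Second order.} Differentiating $\Psi\circ\Phi=\id$ twice and using $d\Phi(0)=I$ gives the standard identity
\[
\frac{\partial^2 \Psi_k}{\partial X_i\partial X_j}\Big|_0 = -\frac{\partial^2 \Phi_k}{\partial y_i\partial y_j}\Big|_0,
\]
where $y=(v,w,s)$. The second partials of $\Phi$ at the origin come directly from $x_{vv}=(0,0,\lambda)$, $x_{ww}=(0,0,\mu)$, $x_{vw}=0$, $\Phi_{vs}=\bfn_v=(-\lambda,0,0)$, $\Phi_{ws}=\bfn_w=(0,-\mu,0)$, and $\Phi_{ss}=0$. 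Substituting into the chain rule formula $\partial_{X_i}\partial_{X_j} u\big|_0 = \partial_{y_i}\partial_{y_j}\tilde u\big|_0 + \sum_k \partial_{y_k}\tilde u\big|_0\cdot \partial^2_{X_i X_j}\Psi_k\big|_0$ yields the five second-order formulas in \eqref{1.4}, including the mixed derivatives $\partial_{X_1}\partial_{X_3}=\partial_v\partial_s+\lambda\partial_v$ and its analogue in $w$.

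\textbf{Third order.} One more differentiation of $\Psi\circ\Phi=\id$, evaluated at $0$ with $d\Phi(0)=I$, produces
\[
\frac{\partial^3\Psi_l}{\partial X_i\partial X_j\partial X_k}\Big|_0 = -\frac{\partial^3\Phi_l}{\partial y_i\partial y_j\partial y_k}\Big|_0 + \text{(quadratic corrections in the second derivatives of }\Phi\text{)}.
\]
The required third derivatives of $\Phi$ at the origin are tabulated from \eqref{1.1}--\eqref{1.3}: e.g.\ $\Phi_{vvs}(0)=\bfn_{vv}(0)=(-f_{vvv},-f_{vvw},-\lambda^2)$ and $\Phi_{vvv}(0)=(0,0,f_{vvv})$. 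Plugging these together with the previously computed second-order data into the triple chain rule expansion of $\partial^2_{X_1}\partial_{X_3}u\big|_0$ produces exactly the six terms on the right-hand side: $\partial_v^2\partial_s$ from the triple-derivative term, $2\lambda\partial_v^2$ and $-\lambda\partial_s^2$ from the pairings of first and second derivatives of $\Psi$, and $-\lambda^2\partial_s+f_{vvv}\partial_v+f_{vvw}\partial_w$ from the pure third-derivative correction. The formula for $\partial^2_{X_2}\partial_{X_3}$ is identical under the symmetry $v\leftrightarrow w$, $\lambda\leftrightarrow\mu$.

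The main obstacle is simply bookkeeping in the third-order step: several chain-rule terms conspire to give the compact expression, and one must carefully track which terms come from inverting the Jacobian at second versus third order. No non-trivial geometric identities are required beyond what is already assembled in Lemma \ref{lem:geom dO in Koords u,v}.
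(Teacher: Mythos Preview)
Your argument is correct and rests on the same ingredients as the paper: the chain rule together with the explicit values of $x_v,x_w,x_{vv},\bfn,\bfn_v,\bfn_{vv}$ at the origin from \eqref{1.1}--\eqref{1.3}. The only organizational difference is the direction of the computation. The paper uses that $\Phi(v,w,s)=x(v,w)+s\,\bfn(v,w)$ is explicit, applies the chain rule \emph{forward} to write $\partial_v,\partial_w,\partial_s$ and their iterates as combinations of $\partial_{X_i}$, and then algebraically solves this triangular system for the $\partial_{X_i}$-expressions. You instead invert first, computing the Taylor coefficients of $\Psi=\Phi^{-1}$ from $\Psi\circ\Phi=\id$, and then apply the chain rule in the $X\to y$ direction. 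These two procedures are formally equivalent (the ``solving'' step in the paper is exactly your implicit-differentiation identity $\partial^2_{X_iX_j}\Psi_k|_0=-\partial^2_{y_iy_j}\Phi_k|_0$ and its third-order analogue), so the difference is purely one of bookkeeping, not of substance.
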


\begin{proof}
By differentiating and using the chain rule repeatedly, one  gets
\begin{eqnarray*}
&&\partial_s=\frac{\partial X}{\partial x}\partial_X = \boldsymbol{n}\cdot \partial_X, \quad \partial_s^2=\boldsymbol{n}\cdot\boldsymbol{n}\cdot
\partial_X^2;\\
&&\partial_v=X_v\cdot \partial_X,\quad
\partial_v^2=\partial_v(X_v\cdot \partial_X)=X_v\cdot X_v\cdot
\partial_X^2+X_{vv}\cdot \partial_X;\\
&&\partial_v \partial_s=\boldsymbol{n}\cdot X_v\cdot
\partial_X^2+\boldsymbol{n}_v\cdot \partial_X,\\
&&\partial_v^2\partial_s=\boldsymbol{n}\cdot
X_v\cdot X_v\cdot
\partial_X^3+\boldsymbol{n}\cdot X_{vv}\cdot
\partial^2_X+2\boldsymbol{n}_v\cdot X_v\cdot
\partial_X^2+\boldsymbol{n}_{vv}\cdot\partial_X,
\end{eqnarray*}
 and analogous formulas for the $w$-derivatives\footnote{For vectors $V,W$ we write $V\cdot W\cdot\partial_X^2 =\sum\limits_{i,j=1}^3 V_i W_j \partial_{X_i}\partial_{X_j}$, for short, and similar for third derivatives in $X$.}.

We now evaluate all quantities at $p=0$, i.e.
$v=w=s=0$. Using \eqref{1.3} and
$
X_{v}(0,0,0)=(1,0,0),\ X_{vv}(0,0,0)=(0,0,\lambda)
$
we get at $p$
\begin{eqnarray*}
&&\partial_s=\partial_{X_3},\quad
\partial_s^2=\partial_{X_3}^2;\\
&&\partial_v=\partial_{X_1},\quad
\partial_v^2=\partial^2_{X_1}+\lambda\partial_{X_3};\\
&&\partial_v
\partial_s=\partial_{X_1}\partial_{X_3}-\lambda\partial_{X_1},\\
&&\partial_v^2 \partial_s=\partial^2_{X_1}\partial_{X_3}+\lambda
\partial^2_{X_3}-2\lambda\partial^2_{X_1}-f_{vvv}\partial_{X_1}-f_{vvw}\partial_{X_2}-\lambda^2
\partial_{X_3}.
\end{eqnarray*}

Analogous formulas hold with $v$ replaced by $w$, $\lambda$ by
$\mu$ and with $\partial_{X_1},\partial_{X_2}$ interchanged.
 Solving these equations  for the derivatives   $\partial_{X_i}$ one gets the desired relations.
 \end{proof}

\begin{lemma}\label{lemma2}
Let $u$ be harmonic on $\Omega$, smooth up to $\dO$. Then
\begin{eqnarray}
(\bfn\cdot\nabla)^2 u&=&N_1 u + N_1' \dn u\label{1.12},\\
(\bfn\cdot\nabla)^3 u&=&N_2 u + N_2' \dn u \label{1.13}
\end{eqnarray}
at $p=0$, with $N_i, N_i'$ as given in Proposition \ref{prop:deriv at dO}.
\end{lemma}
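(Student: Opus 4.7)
The strategy is to use the harmonicity of $u$ together with Lemma \ref{lemma1} to re-express the purely normal derivatives $\partial_s^2 u$ and $\partial_s^3 u$ at $p=0$ in terms of tangential operations and at most one normal derivative, then read off the geometric content from Lemma \ref{lem:geom dO in Koords u,v}. Note that at $p=0$ the vectors $\bfn$ and $\partial_{X_3}$ coincide and $\partial_s=\partial_{X_3}$, so $(\bfn\cdot\nabla)^k u = \partial_s^k u$ at $p$.

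First I would prove \eqref{1.12}. The equation $\Delta u=0$ written in the Euclidean coordinates $X_1,X_2,X_3$ reads $\partial_{X_1}^2 u + \partial_{X_2}^2 u + \partial_{X_3}^2 u=0$. Substituting the relations from Lemma \ref{lemma1} at $p=0$,
\[
\partial_{X_1}^2 = \partial_v^2 - \lambda\partial_s,\quad \partial_{X_2}^2=\partial_w^2 - \mu\partial_s,\quad \partial_{X_3}^2=\partial_s^2,
\]
and using $\Delta_\partial u = (\partial_v^2+\partial_w^2)u$ at $p$ by \eqref{eq:div nabla Laplace}, together with $\lambda+\mu=2H$, gives
\[
\partial_s^2 u = -\Delta_\partial u + 2H\partial_n u
\]
at $p$, which is exactly $N_1 u + N_1'\partial_n u$.

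Next I would handle \eqref{1.13}. Differentiating $\Delta u=0$ in the Euclidean direction $X_3$ yields $\partial_{X_3}^3 u = -\partial_{X_1}^2\partial_{X_3}u - \partial_{X_2}^2\partial_{X_3} u$, and at $p$ this equals $\partial_s^3 u$. Expanding the right-hand side via the last two lines of Lemma \ref{lemma1} produces, at $p$, the six groups of terms
\[
\Delta_{vw}(\partial_s u) - 2H\partial_s^2 u + 2(\lambda\partial_v^2 u + \mu\partial_w^2 u) - (\lambda^2+\mu^2)\partial_s u + \nabla(\Delta f)\cdot\nabla u.
\]
Each piece is converted to intrinsic form: the first is $\Delta_\partial(\partial_n u)$; for the second one substitutes the result of step one; the third is rewritten via \eqref{1.10}, which gives $\lambda u_{vv}+\mu u_{ww} = \divd(W\nablad u) - \nabla(\Delta f)\cdot\nabla u$ at $p$; the fourth uses $\lambda^2+\mu^2 = 4H^2-2K$; and for the fifth one invokes \eqref{1.11} to write $\nabla(\Delta f) = 2\nablad H$ at $p$. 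After collecting terms, the two contributions involving $\nabla(\Delta f)$ combine to $-2\nablad H\cdot\nablad u$, and one obtains
\[
\partial_s^3 u = -\Delta_\partial(\partial_n u) - 2H\Delta_\partial u - 2\divd(W\nablad u) + 2\nablad H\cdot\nablad u + (8H^2-2K)\partial_n u,
\]
which is precisely $N_2 u + N_2'\partial_n u$.

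The only genuine bookkeeping step is the sign- and curvature-accounting in the $\partial_s u$ coefficient of \eqref{1.13}: the $-4H^2\partial_n u$ coming from $-2H\cdot 2H\partial_n u$ in step one must be combined with $-(\lambda^2+\mu^2)\partial_n u = -(4H^2-2K)\partial_n u$ to produce exactly $-8H^2\partial_n u + 2K\partial_n u$, whose sign-flipped form matches $N_2'$. I do not anticipate any conceptual difficulty; the argument is really a mechanical application of Lemma \ref{lemma1} together with the intrinsic translations provided by Lemma \ref{lem:geom dO in Koords u,v}.
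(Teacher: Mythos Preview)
Your approach is exactly the one the paper takes: use $\Delta u=0$ to write $\partial_{X_3}^2 u$ and $\partial_{X_3}^3 u$ in terms of the mixed derivatives from Lemma~\ref{lemma1}, then translate into intrinsic quantities via Lemma~\ref{lem:geom dO in Koords u,v}, \eqref{1.10}, \eqref{1.11}. The structure and the final formulas are correct.

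There is, however, a global sign slip in your intermediate display for $\partial_s^3 u$. From Lemma~\ref{lemma1} one gets
\[
\partial_s^3 u = -(\partial_v^2+\partial_w^2)\partial_s u + 2H\,\partial_s^2 u - 2(\lambda\partial_v^2+\mu\partial_w^2)u + (\lambda^2+\mu^2)\partial_s u - \nabla(\Delta f)\cdot\nabla u,
\]
i.e.\ every term has the opposite sign from what you wrote. If one follows your signs literally, the two $\nabla(\Delta f)$ contributions combine to $-2\nablad H\cdot\nablad u$ (as you state) and the $\partial_n u$ coefficient becomes $-8H^2+2K$, which is the negative of $N_2'$; your remark that the ``sign-flipped form matches $N_2'$'' is where this inconsistency surfaces. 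With the correct signs the $\nabla(\Delta f)$ terms combine to $+2\nablad H\cdot\nablad u$ and the $\partial_n u$ coefficient is $4H^2+(4H^2-2K)=8H^2-2K$, yielding your (correct) final line directly, with no sign flip needed.
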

Here $\left[(\bfn\cdot\nabla)^2 u\right](p)$ is understood as $\left(\bfn(p)\nabla\right)^2 u$, evaluated at $p$; that is, $\bfn$ is not differentiated\footnote{Of course it would not make sense to differentiate it since it is only defined at $\dO$. But this is in contrast to the consideration of $(\dn u)^{\cdot}$ below, where the normal depends on $h$, hence must be differentiated.}. In the coordinates above, we have at $p=0$
$$ (\bfn\cdot\nabla)^2 u=\partial^2_{X_3}u,\quad (\bfn\cdot\nabla)^3 u =
\partial^3_{X_3}u $$
\begin{proof}
A short calculation, using (\ref{1.4}), \eqref{eq:div nabla Laplace} and $\partial_s=\partial_{X_3}=\partial_{n}$, gives, at $0$,
\begin{eqnarray*}
\partial^2_{X_3}u &=&-(\partial^2_{X_1}+\partial^2_{X_2})u\\
&=&-(\partial_v^2-\lambda \partial_s)u-(\partial_w^2-\mu\partial_s)u\\
&=&-(\partial_v^2+\partial_w^2)u+(\lambda+\mu)\partial_s u\\
&=&-\Deltad u+2H\partial_{n} u.
\end{eqnarray*}
Next we evaluate $\partial^3_{X_3}u$ at $0$. By the last formulas in (\ref{1.4}) we get
\begin{eqnarray*}
\partial^3_{X_3}u &=& -(\partial^2_{X_1}+\partial^2_{X_2})\partial_{X_3}u\\
&=&-(\partial_v^2 \partial_s+\partial_w^2 \partial_s)u+(\lambda+\mu)\partial^2_s u\\
&-&2(\lambda\partial_v^2+\mu\partial_w^2)u+(\lambda^2+\mu^2)\partial_s u\\
&-&(\underbrace{f_{vvv}\partial_v+f_{www}\partial_w+f_{vvw}\partial_w+f_{wwv}\partial_v}_{\nabla(\Delta f)\cdot\nabla})u.
\end{eqnarray*}

Using \eqref{eq:div nabla Laplace}, (\ref{1.10}), (\ref{1.11}) and then  $\partial_s=\partial_{n}$ (everywhere),  $\partial^2_s u =\partial^2_{X_3}u=(-\Deltad+2H\partial_{n})u$ (at $p$ only) we obtain
\begin{align*}
\partial^3_{X_3}u &=-\Deltad\partial_s u+2H\partial_s^2 u+(4H^2-2K)\partial_s u
-2\divd(W(\nablad u))+2\nablad H\cdot \nablad u\\
&= \Big(-\Deltad+8H^2-2K\Big)\partial_{n}u+
2\Big(-H\Deltad-\divd(W(\nablad))+\nablad H\cdot\nablad\Big)u
\end{align*}
as claimed.
\end{proof}

Now we prove the second part of Proposition \ref{prop:deriv at dO}.
We have $\frac{d}{dh} \left[(\nabla u)(q_h,h)\right] = (\dot{q}_h\cdot\nabla)(\nabla u)(q_h,h) + (\nabla \udot)(q_h,h)$. With $q_h=p+ha(p)\bfn(p)$ we get $\dot{q}_h=a(p)\bfn(p)$. At $p=0$, $\bfn(p)\cdot\nabla = \partial_{X_3}$.
So we get, using (\ref{1.4}) and (\ref{1.12}),  with everything evaluated at $p=0$, $h=0$,
\begin{eqnarray*}
\big(\nabla u\big)^{\cdot}&=&a\big(\boldsymbol{n}\cdot \nabla\big)\nabla u+\nabla \dot{u}=
a\left(\begin{array}{c}\partial_{X_3}\partial_{X_1}u\\
\partial_{X_3}\partial_{X_2}u\\
\partial^2_{X_3}u\end{array}\right)+\nabla \dot{u}\\
&=&a\left\{\left(\begin{array}{c}
\partial_v \partial_n u\\
\partial_w \partial_n u\\
0
\end{array} \right)+\left(\begin{array}{c}
\lambda \partial_v u\\
\mu \partial_w u\\
0
\end{array}\right)+\left(\begin{array}{c}
0\\
0\\
N_1 u + N_1' \dn u
\end{array}\right)\right\}+\nabla\dot{u}\\
\end{eqnarray*}
The first term is $a\nablad \partial_n u$ by \eqref{eq:div nabla Laplace}, and the second is $aW(\nablad u)$ by \eqref{1.6}, \eqref{eq:div nabla Laplace}.

Finally,
$$ \big(\nabla u\big)^{\cdot\cdot}=a^2 \big(\boldsymbol{n}\cdot \nabla\big)^2 \nabla u +2a\big(\boldsymbol{n}\cdot \nabla\big)\nabla \dot{u}+\nabla\ddot{u} $$
and the normal component of the first term is
$a^2(\partial_{X_3}^3 u) \bfn = a^2(N_2 u + N_2' \dn u)\bfn$ by (\ref{1.13})
while the last two terms can be rewritten as $2(\nabla\udot)^{\cdot} - \nabla \uddot$.
This finishes the proof of Proposition \ref{prop:deriv at dO}.
\end{proof}

\begin{lemma}\label{lem:partial n u dot}
 Let $u$ be as in Proposition \ref{prop:deriv at dO}.
The derivatives of $h\mapsto u(q_h,h)$ at  $h=0$ are
\begin{equation}
\label{eq:udot uddot}
(u)^{\cdot} = a\,\dn u + \udot,\qquad (u)^{\cdot\cdot} = -a^2\Deltad u + 2a^2 H \dn u + 2a\, \dn \udot + \uddot
\end{equation}
The derivatives of $h\mapsto(\partial_n u)(q_h,h)$ at  $h=0$ are
\begin{align}\label{2.3}
\big(\partial_n u\big)^{\cdot} &= P_1(u) + Q_1(\partial_n u) + \partial_n\dot{u}\\
\label{2.7}
\big(\partial_n u\big)^{\cdot\cdot} &=P_2(u)+Q_2\big(\partial_n u\big)+2P_1(\dot{u})+2Q_1 (\partial_n \dot{u}) +\partial_n \ddot{u}
\end{align}
 with
 \begin{eqnarray*}
 P_1&=&-\divd a\nablad,\\
 Q_1&=&2aH,\\
 P_2&=&-2\Big(a^2 H \Deltad +
\divd (a^2 W) \nablad - a^2(\nablad H)\cdot\nablad \Big),\\
 Q_2&=&-\divd a^2\nablad + a^2(8H^2-2K)-|\nablad a|^2,\\
 \end{eqnarray*}
\end{lemma}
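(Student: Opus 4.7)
The plan is to derive each formula by expanding $u(q_h,h)$ (resp.\ $\dn u(q_h,h)=\bfn(q_h,h)\cdot(\nabla u)(q_h,h)$) with the chain (resp.\ Leibniz) rule, then substituting the expressions for $\bfndot,\bfnddot,(\nabla u)^{\cdot},(\nabla u)^{\cdot\cdot}$ provided by Proposition \ref{prop:deriv at dO} and the identity $(\bfn\cdot\nabla)^2 u=N_1 u+N_1'\dn u$ from Lemma \ref{lemma2}. A convenient simplification is that along the path $q_h=p+ha(p)\bfn(p)$ the velocity $\dot q_h=a(p)\bfn(p)$ is constant in $h$ and $\ddot q_h=0$, so chain-rule expansions carry no $\ddot q_h$ correction.

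For \eqref{eq:udot uddot}, the chain rule directly gives $(u)^{\cdot}=a\dn u+\udot$ and $(u)^{\cdot\cdot}=a^2(\bfn\cdot\nabla)^2 u+2a\,\dn\udot+\uddot$; the first term is rewritten as $a^2(-\Deltad u+2H\dn u)$ via Lemma \ref{lemma2}. For $(\dn u)^{\cdot}$ in \eqref{2.3}, apply Leibniz to $\bfn\cdot\nabla u$: the $\bfndot$-piece is $-\nablad a\cdot\nablad u$ (since $\bfndot=-\nablad a$ is tangential), and the $\bfn$-piece is the normal component of $(\nabla u)^{\cdot}$, namely $a(-\Deltad u+2H\dn u)+\dn\udot$. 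Combining via the product rule $\nablad a\cdot\nablad u+a\Deltad u=\divd(a\nablad u)$ produces $P_1(u)+Q_1(\dn u)+\dn\udot$.

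For \eqref{2.7}, expand $(\dn u)^{\cdot\cdot}=\bfnddot\cdot\nabla u+2\bfndot\cdot(\nabla u)^{\cdot}+\bfn\cdot(\nabla u)^{\cdot\cdot}$, substitute from Proposition \ref{prop:deriv at dO}, and regroup by $u,\dn u,\udot,\dn\udot,\uddot$. The $\uddot$-part is visibly $\dn\uddot$; the $\udot$ and $\dn\udot$ contributions (coming from the $\nabla\udot$ term in $(\nabla u)^{\cdot}$ inside $2\bfndot\cdot(\nabla u)^{\cdot}$ and from the $2(\nabla\udot)^{\cdot}-\nabla\uddot$ term in $(\nabla u)^{\cdot\cdot}$) combine through $\divd(a\nablad\udot)=\nablad a\cdot\nablad\udot+a\Deltad\udot$ into $2P_1(\udot)+2Q_1(\dn\udot)$. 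For the $\dn u$-coefficient, the identity $\divd(a^2\nablad\dn u)=2a\nablad a\cdot\nablad\dn u+a^2\Deltad\dn u$ assembles the three contributions $-|\nablad a|^2\dn u$ (from $\bfnddot\cdot\nabla u$), $-2a\nablad a\cdot\nablad\dn u$ (from $2\bfndot\cdot(\nabla u)^{\cdot}$ via $T_1'=\nablad$), and $a^2 N_2'\dn u=a^2(-\Deltad+8H^2-2K)\dn u$ into exactly $Q_2(\dn u)$.

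The main obstacle is the identification of $P_2(u)$, since three sources contribute to the pure $u$-term and must telescope exactly. These are $-W(\nablad a^2)\cdot\nablad u$ from $\bfnddot\cdot\nabla u$, $-2a\nablad a\cdot W(\nablad u)$ from $2\bfndot\cdot(\nabla u)^{\cdot}$ (using the tangential part $aW(\nablad u)+a\nablad\dn u+\nablad\udot$), and $a^2 N_2 u=-2a^2 H\Deltad u-2a^2\divd(W\nablad u)+2a^2\nablad H\cdot\nablad u$ from $\bfn\cdot(\nabla u)^{\cdot\cdot}$. By symmetry of the Weingarten map, $W(\nablad a^2)\cdot\nablad u=2a\nablad a\cdot W(\nablad u)$, so the first two pieces sum to $-4a\nablad a\cdot W(\nablad u)=-2[\divd(a^2 W\nablad u)-a^2\divd(W\nablad u)]$ via the product rule; adding $a^2 N_2 u$ and cancelling the $a^2\divd(W\nablad u)$ terms yields $-2a^2 H\Deltad u-2\divd(a^2 W\nablad u)+2a^2\nablad H\cdot\nablad u=P_2(u)$, as required.
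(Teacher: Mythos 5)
Your proposal is correct and follows essentially the same route as the paper: expand $u(q_h,h)$ and $\bfn(q_h,h)\cdot\nabla u(q_h,h)$ by the chain/Leibniz rule (using $\ddot q_h=0$), substitute the formulas for $\bfndot,\bfnddot,(\nabla u)^{\cdot},(\nabla u)^{\cdot\cdot}$ from Proposition \ref{prop:deriv at dO} and the normal-derivative identities of Lemma \ref{lemma2}, and collect terms by $u,\dn u,\udot,\dn\udot,\uddot$ using the product rule for $\divd$ and the self-adjointness of $W$. Your assembly of $P_2$ and of the $2P_1(\udot)+2Q_1(\dn\udot)$ block is exactly the computation the paper performs, only phrased slightly more explicitly.
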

 \begin{proof}
The first equation of \eqref{eq:udot uddot} is obvious. The second one follows from
$\left(u(q_h,h)\right)^{\cdot\cdot}= a^2 (\bfn\cdot\nabla)^2 u  + 2a(\bfn\cdot\nabla) \udot  + \uddot$
and \eqref{1.12}.

Equation \eqref{2.3} follows from $ \big(\partial_n u\big)^{\cdot}=\big(\boldsymbol{n}(q_h,h)\cdot\nabla u(q_h,h)\big)^{\cdot}=\dot{\boldsymbol{n}}\cdot\nabla u+
 \boldsymbol{n}\cdot \big(\nabla u\big)^{\cdot},
 $
 and  Proposition \ref{prop:deriv at dO}: The terms acting on $u$ are $$-(\nablad a)\cdot\nablad + a N_1 = -(\nablad a)\cdot\nablad - a\Deltad= -\divd a \nablad=P_1$$
and there is one term acting on $\dn u$, which is $a N_1'=Q_1$.

For \eqref{2.7} we write
$
 \big(\partial_n u\big)^{\cdot\cdot}=\big( \boldsymbol{n}(q_h,h)\cdot \nabla u(q_h,h)\big)^{\cdot\cdot} = \ddot{\boldsymbol{n}}\cdot \nabla u+2\dot{\boldsymbol{n}}\cdot \big(\nabla u\big)^{\cdot}+\boldsymbol{n}\cdot \big(\nabla u\big)^{\cdot\cdot}
$, use Proposition \ref{prop:deriv at dO} and sort by terms involving $u$, $\dn u$, $\udot$, $\dn\udot$. The terms acting on $u$ are
\begin{align*}
 & -W(\nablad a^2)\cdot\nablad
-2(\nablad a) a T_1 + a^2 N_2  \\
=& -(\nablad a^2) \cdot W \nablad - (\nablad a^2)\cdot W\nablad +a^2 N_2\qquad\text{ (since $W$ is selfadjoint)} \\
=& -2 \left[ (\nablad a^2)\cdot W\nablad + a^2\divd W \nablad + a^2H\Deltad  - a^2(\nablad H)\nablad \right]= P_2
\end{align*}
The terms acting on $\dn u$ are
\begin{align*}
& -|\nablad a|^2 - 2(\nablad a)aT_1'+ a^2 N_2' \\
=&
-|\nablad a|^2 - (\nablad a^2)\cdot\nablad - a^2\Deltad + a^2(8H^2-2K) = Q_2
\end{align*}
The terms involving $\udot$ and $\dn\udot$ arise from one of the $h$-derivatives in $\big( \boldsymbol{n}(q_h,h)\cdot \nabla u(q_h,h)\big)^{\cdot\cdot}$ falling on the second argument of $u$, hence are twice the corresponding terms in $(\dn \udot)^{\cdot}$.
\end{proof}

\subsubsection{End of proof of the perturbation formulas}

Now let $u(\cdot,h)$, $\eps(h)$ be a solution of \eqref{I1}-\eqref{I4} for $\Omega_h$, for each $h$, smoothly depending on all variables (up to the boundary of $\Omega_h$ from either side).

We first prove \eqref{eqn1a}-\eqref{eqn3b} and calculate $F_1,G_1,F_2,G_2$.
If $x\not\in\dO$ then $x\not\in \partial\Omega_h$ for sufficiently small $h$, and \eqref{eqn1a},\eqref{eqn1b} follow. The boundary conditions are
\begin{align}
\label{eq:u(h) eqn}
u_-(q_h,h) - u_+(q_h,h) &=0 \\
\label{eq:dn u(h) eqn}
\eps(h)\partial_n u_- (q_h,h) + \partial_n u_+ (q_h,h) &=0
\end{align}
where $q_h= p +ha(p) \bfn(p)$, for each $p\in\dO$ and $h$.
We differentiate \eqref{eq:u(h) eqn}  and use the first equation in \eqref{eq:udot uddot}
for $u_-$ and $u_+$, at $h=0$. This gives $\udot_--\udot_+=-a(\dn u_- - \dn u_+)$. Using \eqref{eq:dn u(h) eqn} at $h=0$ we obtain \eqref{eqn2a} with
\begin{equation}
\label{eq:F}
 F_1 = -  (\eps+1) a\dn u_-
\end{equation}
Next, we differentiate \eqref{eq:dn u(h) eqn} at $h=0$ and use \eqref{2.3} for $u_-$ and $u_+$. We obtain
\begin{align*}
0 &= \epsdot \dn u_- + \eps (\dn u_-)^{\cdot} + (\dn u_+)^{\cdot}  \\
&=
\epsdot \dn u_- + P_1 (\eps u_- + u_+) + Q_1 (\eps \dn u_- + \dn u_+)
+ \eps \dn \udot_- + \dn \udot_+ \\
&= \epsdot \dn u_-  + (\eps+1) P_1 u_- + \eps \dn \udot_- + \dn \udot_+
\end{align*}
using \eqref{eq:u(h) eqn}, \eqref{eq:dn u(h) eqn} at $h=0$, hence \eqref{eqn3a} with
\begin{equation}
\label{eq:G}
 G_1 = -(\eps + 1) P_1 u_-
\end{equation}

Differentiating \eqref{eq:u(h) eqn} twice and using \eqref{eq:udot uddot} for $u_-$ and $u_+$ we obtain
\begin{align*}
0
&= \left(u_-\right)^{\cdot\cdot} - \left(u_+\right)^{\cdot\cdot} \\
&= - a^2 \Deltad (u_- - u_+) + a Q_1 (\dn u_- - \dn u_+)  + 2a (\dn \udot_- - \dn \udot_+) + \uddot_- - \uddot_+
\end{align*}
By \eqref{eq:u(h) eqn}, \eqref{eq:dn u(h) eqn} at $h=0$ the first term vanishes and the second equals $(\eps+1) a Q_1 \dn u_-$. By \eqref{eqn3a} the third term is $(\eps+1)2a \dn \udot_- + \epsdot 2a\dn u_- - 2a G_1$.
Rearranging and using \eqref{eq:G} we obtain \eqref{eqn2b} with
\begin{equation}
\label{eq:F_2}
 F_2 = -(\eps + 1)D - 2a\epsdot \dn u_-
\end{equation}
$$ D = a\big( 2 P_1 u_- +  Q_1  \dn u_- + 2\dn \udot_-  \big)$$

Finally, we differentiate \eqref{eq:dn u(h) eqn} twice and get, using \eqref{2.3} and \eqref{2.7},
\begin{align*}
0
&=  \ddot{\varepsilon}\partial_{n} u_{-}+2\dot{\varepsilon}(\partial_{n} u_{-})^\cdot+\varepsilon (\partial_{n} u_{-})^
 {\cdot\cdot}+(\partial_{n} u_{+})^{\cdot\cdot}  \\
&= \epsddot \dn u_- + 2\epsdot (P_1 u_- + Q_1 \dn u_- + \dn \udot_-) \\
&+ P_2 (\eps u_- + u_+) + Q_2 (\eps \dn u_- + \dn u_+)
+ 2 P_1 (\eps \udot_- + \udot_+) + 2 Q_1 (\eps \dn \udot_- + \dn\udot_+) \\
&+ \eps \dn \uddot_- + \dn \uddot_+  \\
\intertext{The terms in the third line of this equation are, using \eqref{eq:u(h) eqn}, \eqref{eq:dn u(h) eqn} at $h=0$ and \eqref{eqn2a}, \eqref{eqn3a},}
& (\eps+1) P_2 u_- + 0 + 2P_1 ((\eps+1) \udot_- - F_1) +
2 Q_1 (-\epsdot \dn u_- + G_1) \\
\intertext{
Using $F_1 = -  (\eps+1) a\dn u_-$, $G_1 = -(\eps + 1) P_1 u_-$ we obtain }
0 &= \epsddot \dn u_- + 2\epsdot A + (\eps+1)B +  \eps \dn \uddot_- + \dn \uddot_+
\end{align*}
with
\begin{align*}
A &= P_1 u_- + \dn \udot_- \\
B &= (P_2  - 2 Q_1 P_1) u_- + 2 P_1 a \dn u_- + 2 P_1 \udot_-
\end{align*}
Rearranging, we get \eqref{eqn3b} with
\begin{equation}
\label{eq:G_2}
G_2 = -2\epsdot A - (\eps+1) B
\end{equation}
Before proceeding we simplify $B$. Note that
\begin{eqnarray*}
P_2 - 2Q_1 P_1
&=&-2a^2 H\Deltad+2a^2 \big(\nablad H\big)\cdot \nablad-2\divd (a^2 W) \nablad +4aH\big( a \Deltad+(\nablad a)\cdot \nablad\big)\\
&=&-2\divd\big(a^2 W\big)\nablad +2\divd\big(a^2 H\big)\nablad\\
&=&-2\divd\big(a^2 W_0)\nablad
\end{eqnarray*}
where $W_0=W-HI$. Thus
\[
B=-2\divd\big(a^2 W_0\big)\nablad u_- + 2 P_1\big( a \dn u_- + \dot{u}_-\big)
\]

We have proved \eqref{eq:udot}. In order to calculate $\epsdot$, $\epsddot$, we use the following lemma.

\begin{lemma}[Compatibility condition]
\label{lem:compat}
Let $\eps$ be a plasmonic eigenvalue with eigenspace $E$. Given functions $F,\Gtilde$ on $\dO$, the inhomogeneous system
\begin{align*}
\Delta v & = 0 \quad\text{in }\R^3\setminus\partial\Omega
\\
v_- - v_+ & = F \quad\text{on }\partial\Omega
\\
\eps \partial_n v_- + \partial_n v_+ & = \Gtilde \quad\text{on } \partial\Omega
\end{align*}
has a solution $v$ satisfying the decay condition \eqref{I4} iff
$$ \langle \Gtilde,w_{-} \rangle = \eps \langle F,\partial_n w_{-} \rangle$$
for all $w\in E$. Here $\langle\cdot,\cdot\rangle$ is the scalar product on $L^2(\dO)$.
\end{lemma}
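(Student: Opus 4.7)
The plan is to reduce the inhomogeneous transmission problem to a single equation on $\dO$ via the Dirichlet–Neumann operators, and then to recognize the compatibility condition as the standard orthogonality-to-the-kernel condition for a self-adjoint elliptic operator.

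First I would replace $v$ by its boundary data. Set $g_\pm = v_\pm{}_{|\dO}$, so that $\dn v_\pm = \calN_\pm g_\pm$, in accordance with the bijection \eqref{eq:1-1 corr} and the discussion preceding Proposition \ref{prop:dirichlet-neumann} (taking into account the decay condition \eqref{I4}). The jump condition $v_- - v_+ = F$ becomes $g_+ = g_- - F$, and substituting into the Neumann-type condition turns the system into the single equation
\begin{equation*}
 (\eps\calN_- + \calN_+)\, g_- \;=\; \Gtilde + \calN_+ F \qquad \text{on } \dO.
\end{equation*}
Conversely, any smooth solution $g_-$ of this equation produces, by solving the interior and exterior Dirichlet problems with boundary data $g_-$ and $g_- - F$, a solution $v$ of the inhomogeneous system satisfying \eqref{I4}.

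Next, I would invoke the analysis of $P_\eps := \eps\calN_- + \calN_+$ carried out in the proof of Theorem \ref{thm:regularity}. For $\eps\neq 1$ this is an elliptic, classical, self-adjoint pseudodifferential operator of order one on $\dO$, with principal symbol $(\eps-1)|\xi|$. Hence $P_\eps$ is Fredholm between the appropriate Sobolev spaces, and by Proposition \ref{prop:dirichlet-neumann} its kernel is precisely $\{w_-{}_{|\dO} : w\in E\}$. By self-adjointness, the range of $P_\eps$ is the $\langle\cdot,\cdot\rangle$-orthogonal complement of this kernel, so the equation above is solvable for $g_-\in\Cinfty(\dO)$ (by elliptic regularity of $P_\eps$, given smooth data) if and only if
\begin{equation*}
 \langle \Gtilde + \calN_+ F,\; w_- \rangle \;=\; 0 \qquad \text{for all } w\in E.
\end{equation*}

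Finally, I would rewrite this orthogonality condition in the form stated in the lemma. Using self-adjointness of $\calN_+$ and the fact that $w\in E$ satisfies $(\eps\calN_- + \calN_+) w_- = 0$, hence $\calN_+ w_- = -\eps\calN_- w_- = -\eps\,\dn w_-$, we get
\begin{equation*}
 \langle \calN_+ F,\, w_-\rangle \;=\; \langle F,\, \calN_+ w_-\rangle \;=\; -\eps\,\langle F,\, \dn w_-\rangle,
\end{equation*}
so the compatibility condition becomes $\langle \Gtilde, w_-\rangle = \eps \langle F,\dn w_-\rangle$, as claimed. There is no real obstacle; the only point requiring care is keeping sign conventions straight for $\calN_\pm$ (recall $\calN_+$ is negative because $\bfn$ points into $\Omega^c$) and invoking the ellipticity hypothesis $\eps\neq 1$, already ensured by the standing assumption in Theorem \ref{thm:perturbation}.
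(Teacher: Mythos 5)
Your proof is correct and follows essentially the same route as the paper: reduce the transmission problem to the single equation $P_\eps g = \text{rhs}$ via the Dirichlet--Neumann operators and use self-adjointness (plus Fredholm/closed range) to characterize solvability by orthogonality to $\Ker P_\eps$. The only cosmetic difference is that you take $g_-=v_{-|\dO}$ as the unknown, giving right-hand side $\Gtilde + \calN_+F$, whereas the paper takes $g=v_{+|\dO}$, giving $\Gtilde - \eps\calN_-F$; your explicit remark that ellipticity (hence $\eps\neq1$) is what makes the range closed is a welcome precision that the paper leaves implicit.
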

\begin{proof}
We reformulate this in terms of Dirichlet-Neumann operators, compare Proposition \ref{prop:dirichlet-neumann}.
Let $g=v_{+}$ at $\dO$. Then the second equation says $v_- = F+g$, and the first and third that $\eps \calN_- (F+g) + \calN_+ g = \Gtilde$. Therefore, a solution $v$ exists if and only if the equation
$$ (\eps \calN_- + \calN_+) g = \Gtilde-\eps \calN_- F $$
has a solution $g$. Since the operator $\eps \calN_- + \calN_+$ is selfadjoint, this is equivalent to $\langle \Gtilde-\eps\calN_- F,w_{-}\rangle = 0$ for all $w\in E$. This is equivalent to the claim.
\end{proof}

We apply the lemma with $v=\udot$ and $w=u$ and $F=F_1$, see \eqref{eq:F}, $\Gtilde=-\epsdot \dn u_- + G_1$, see \eqref{eq:G}. Using the assumption $\langle u_- ,\dn u_- \rangle = \|u_-\|_-^2 = 1$ we get
$$ \epsdot = \langle G_1,u_- \rangle - \eps \langle F_1, \dn u_- \rangle
$$
Integrating by parts $\langle G_1,u_- \rangle = (\eps+1) \langle \divd a \nablad u_-, u_- \rangle = -(\eps+1) \langle a\nablad u_-,\nablad u_-\rangle$ we obtain \eqref{eq:epsdot}.

Now we apply the lemma with $v=\uddot$ and $w=u$ and $F=F_2$, see \eqref{eq:F_2}, $\Gtilde=-\epsddot \dn u_- + G_2$, see \eqref{eq:G_2}. We obtain
$$ \epsddot = \langle G_2,u_- \rangle - \eps \langle F_2, \dn u_- \rangle
$$
Now we sort by like terms. For example, the terms involving $u_-,u_-$ are
$$ -2\epsdot\langle P_1 u_-,u_- \rangle - (\eps+1)(-2) \langle \divd a^2 W_0 \nablad u_-,u_- \rangle $$
where $P_1 = -\divd a\nablad$. Integrating by parts yields the first line in
\eqref{eq:epsddot},
and the other terms are obtained in a similar way.

\ownremark
{One may try to simplify this by choosing a particular $\udot$. For example, one may demand that $u(h)$ is normalized by $\| u(h)\|_-=1$ for all $h$. Differentiating
$ 1 = \|u(h)\|_-^2 = \int_{\Omega_h} |\nabla u(x,h)|^2\,dx $
at $h=0$ yields
$ 0 = 2\int_\Omega \nabla u\cdot\nabla \udot\,dx + \int_{\dO} a |\nabla  u_-|^2\,dx 
= 2 \langle u_-,\dn \udot_- \rangle + \scpr{\nablad u_-}{a}{\nablad u_-} +
\scpr{\dn u_-}{a}{\dn u_-}
$
which can be used to remove the $\langle u_-,\dn \udot_- \rangle$ term, but it does not seem worth it (the other terms are just added, don't cancel anything).
}

\subsection{Shape derivative of Dirichlet-Neumann operator}
As a side result of our calculations above we can calculate the derivative of the Dirichlet-Neumann operator $\calN:=\calN_-$ when changing the domain as in \eqref{eq:def Omega_h}. Since the domain $\dO_h$ varies, we have to state precisely what we mean by this. Recall from Section \ref{subsec:analytic} the diffeomorphism $T_h:\Omega\to\Omega_h,\ x\mapsto x+h V(x)$ where $V=a\bfn$ on $\dO$. This induces an isomorphism $T_h^*:\Cinfty(\dO_h) \to \Cinfty(\dO),\ g\mapsto T_h^*g = g\circ T_h$. Then we consider
$$\calN_h := T_h^* \circ \calN_{\Omega_h} \circ (T_h^*)^{-1} : \Cinfty(\dO) \to \Cinfty(\dO)$$
where $\calN_{\Omega_h}:\Cinfty(\dO_h) \to \Cinfty(\dO_h)$ is the Dirichlet-Neumann operator for $\Omega_h$. The operator $\calN_h$ is the same as the one introduced in Section \ref{subsec:analytic}. Explicitly, it is given as follows: Let $g\in\Cinfty(\dO)$. Let $u(\cdot,h)$ be the solution of
\begin{align}
\Delta u(\cdot,h) &= 0 \quad\text{ in }\Omega_h \label{eq:DN1}\\
u(T_h(x),h) &= g(x) \quad\text{ for all }x\in \dO \label{eq:DN2}\\
\text{and then set } (\calN_h g)(x) &= (\dn u) (T_h(x),h). \label{eq:DN3}
\end{align}
\begin{theorem} \label{thm:DN-perturb}
The derivative at $h=0$ of the familiy of Dirichlet-Neumann operators defined above is given by
\begin{equation}
\label{eq:DN perturb}
\calNdot = -\divd a \nabla + 2aH \calN - \calN a \calN
\end{equation}
\end{theorem}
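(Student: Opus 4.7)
The plan is to derive the formula by differentiating the defining relations in $h$ at $h=0$ and invoking the identities already established in Lemma \ref{lem:partial n u dot}. Write $g\in\Cinfty(\dO)$, let $u(\cdot,h)$ be the solution of \eqref{eq:DN1}--\eqref{eq:DN2}, and set $u=u(\cdot,0)$. Then by construction $\dn u_{|\dO}=\calN g$, so $\calN_h g$ is obtained from \eqref{eq:DN3} as the pull-back under $T_h$ of $(\dn u)(\cdot,h)$. Differentiating $\calN_h g$ in $h$ at $h=0$ at a fixed point $p\in\dO$ therefore equals $(\dn u)^{\cdot}$ in the sense of Lemma \ref{lem:partial n u dot}, since $q_h=T_h(p)=p+ha(p)\bfn(p)$.

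The first step is to exploit the boundary condition \eqref{eq:DN2}. Rewriting it as $u(q_h,h)=g(p)$ and differentiating in $h$ at $h=0$, the first identity of \eqref{eq:udot uddot} yields $0=a\,\dn u+\udot_-$ on $\dO$, hence
\begin{equation*}
\udot_-=-a\,\dn u=-a\calN g \quad\text{on }\dO.
\end{equation*}
Since $\udot$ is harmonic in $\Omega$ (it is the derivative of a family of harmonic functions), it is the harmonic extension of $-a\calN g$, and therefore
\begin{equation*}
\dn\udot_-=\calN(-a\calN g)=-\calN a\calN g.
\end{equation*}

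The second step is to plug this into the formula \eqref{2.3} for $(\dn u)^{\cdot}$:
\begin{equation*}
\calNdot g = (\dn u)^{\cdot} = P_1(u_-)+Q_1(\dn u_-)+\dn\udot_-.
\end{equation*}
Using $P_1=-\divd a\nablad$, $Q_1=2aH$, the fact that $u_{-|\dO}=g$ so $\nablad u_-=\nablad g$ on $\dO$, and the expression for $\dn\udot_-$ just obtained, the three contributions become $-\divd(a\nablad g)$, $2aH\calN g$, and $-\calN a\calN g$. Summing them gives exactly \eqref{eq:DN perturb} (where $\nabla$ in the statement is understood as the surface gradient $\nablad$).

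The only subtle point, and the one step that requires the machinery of Section \ref{subsec:perturb calc}, is the appearance of the curvature term $2aH\calN g$: all computations of normal second derivatives have been absorbed into $Q_1=2aH$ via Lemma \ref{lemma2}, so no further boundary calculation is required here. The remaining verification is routine identification of terms.
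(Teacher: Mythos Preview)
Your proof is correct and follows essentially the same route as the paper: differentiate \eqref{eq:DN2} to obtain $\udot_{|\dO}=-a\calN g$, use harmonicity of $\udot$ to get $\dn\udot=-\calN a\calN g$, and plug into \eqref{2.3} with $P_1=-\divd a\nablad$, $Q_1=2aH$. The paper's argument is identical in content and organization.
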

\begin{proof}
Differentiate \eqref{eq:DN1}, \eqref{eq:DN2}, \eqref{eq:DN3} at $h=0$ and use \eqref{2.3} (where $q_h=T_h(p)$) for the right side of \eqref{eq:DN3}. This gives
\begin{align*}
\Delta \udot &= 0 \quad\text{ in }\Omega\\
a \dn u + \udot &= 0 \quad\text{ at } \dO\\
\calNdot g &= -\divd a \nablad u + 2aH \dn u + \dn \udot
\end{align*}
Rewrite the second equation as $\udot = -a\dn u$ at $\dO$, then together with the first equation this gives $\dn \udot = \calN (-a\dn u) = -\calN (a \calN g)$, and then the third equation gives the claim.
\end{proof}
For the exterior Dirichlet-Neumann operator $\calN_+$ one has the same formula, replacing $\calN$ by $\calN_+$ everywhere. Same proof.

Using the second derivative formulas in Section \ref{subsec:perturb calc} one could also derive a formula for the second derivative of $\calN_h$ in $h$.

Note that the principal symbol of the first and third term in \eqref{eq:DN perturb} is $a|\xi|^2$, hence $\calNdot$ is a pseudodifferential operator of order at most one, which is to be expected since each $\calN_h$ is a pseudodifferential operator of order one.

\subsection{Splitting at first order}
We now prove the last statement of Theorem \ref{thm:perturbation}, which gives also another (though similar) proof of the formula for $\epsdot$. We use the characterization in terms of Dirichlet-Neumann operators, Proposition \ref{prop:dirichlet-neumann}, where we use the operators $\calN_h$ transplanted from $\Omega_h$ to $\Omega$ as in the previous section. If $g(x,h) = u(T_h(x),h)$ for $x\in\dO$ then $(\eps(h)\calN_{-,h} + \calN_{+,h}) g(h) = 0$ for each $h$.
Differentiating in $h$ at $h=0$ gives
$$ (\epsdot\calN_- + \eps\calNdot_- + \calNdot_+)g + (\eps \calN_- + \calN_+)\gdot=0$$
We take the scalar product of this with any $g'\in E_\partial :=\{u_{|\dO}:u\in E\}$. Using selfadjointness of $\calN_\pm$ and $(\eps \calN_- + \calN_+)g'=0$ we get
$$\langle (\epsdot\calN_- + \eps\calNdot_- + \calNdot_+)g, g' \rangle = 0
\quad \forall g'\in E_\partial $$
or equivalently
$$ - \langle (\eps\calNdot_- + \calNdot_+)g, g' \rangle = \epsdot (g,g')_- $$
A simple calculation using \eqref{eq:DN perturb} for $\calN=\calN_\pm$ reveals that the left hand side equals $q_1(u,u')$, where $u'$ is the harmonic function on $\Omega$ with boundary values $g'$, and the claim follows.

\section{Examples}\label{sec:examples}
\subsection{Planar disk}
\label{subsec:disk}
For the disk $\Omega=\{x\in\R^2:\, |x| < 1\}$ the only plasmonic eigenvalue is $\eps=1$, and any harmonic function in $\Omega$ 
whose value and normal derivative at $\dO$ are defined (e.g.\ any function in  $H^s(\Omega)$ with $s>3/2$) is a plasmon.
This follows from the fact that if $u_-$ is harmonic in $\Omega$ then its Kelvin transform $u_+ (x) = u_-(\frac x{|x|^2})$ is harmonic in $|x|>1$. Clearly, $u_-$ and $u_+$ have the same boundary values at $|x|=1$ and their normal derivatives add to zero, so $\eps=1$. The orthogonality in Theorem \ref{thm:existence+asymp} then shows that there can be no other solutions. Choosing $u$ with $u_{|\dO}$ non-smooth we see that the regularity statement in Theorem \ref{thm:regularity} does not hold.

\subsection{Perturbed ball} \label{subsec:ex ball}
For the unit ball, $\Omega=\{x\in\R^3:\, |x| < 1\}$, the solutions of the plasmonic eigenvalue problem are given by
$$ u_{-,k} (r,\omega) = r^k Y_k (\omega),\ u_{+,k} (r,\omega) = r^{-k-1} Y_k (\omega),\ \epsilon_k = \frac{k+1}k $$
for $k=0,1,\dots$.
Here $r>0$, $\omega$ with $|\omega|=1$ are polar coordinates (i.e.\ $x=r\omega$, $r=|x|$, $\omega=x/|x|$). The function $Y_k$ varies over the space of spherical harmonics of degree $k$, i.e.\ $r^k Y_k(\omega)$ is a $k$-homogeneous harmonic polynomial on $\R^3$. Therefore, $\epsilon_k$ has multiplicity $(2k+1)$.

It is clear that these functions are solutions, and since the $Y_k$ span  $L^2(S^2)$, there can be no further solutions.

Since the problem is scale invariant, one gets the same $\epsilon$ for spheres of arbitrary radius. This implies that
for $a\equiv 1$ we should get $\epsilondot_k = 0$. As a check on our calculations we can see this directly:

From $-\Deltad Y_k = k(k+1) Y_k$ we have for $u_-=u_{-,k}$
 $$\intdO |\nablad u_-|^2 = -\intdO u_-\Deltad u_- = k(k+1) \intdO |Y_k|^2.$$
Also, $\partial_n u_- = \partial_r u_- = k Y_k$ at $r=1$, so the right hand side of \eqref{eq:epsdot} equals for $a\equiv 1$
$$ \left[ k(k+1) + \epsilon_k k^2 \right] \intdO |Y_k|^2 = 0.$$

For arbitrary $a$ we calculate $\epsilondot$ in case $k=1$ and $u_-(x,y,z) = z$, where $\epsilon=2$. This choice of $u_-$ (out of the three-dimensional space, spanned by the functions $x,y,z$, of solutions for $k=1$) corresponds to an exterior electric field pointing in the $z$-direction, in the physical problem leading to \eqref{I1}-\eqref{I3}. We parametrize the sphere as graph $z=\pm \sqrt{1-x^2-y^2}$ \ownremark{(probably not the best choice!)}. Since $u_-$ is 1-homogeneous, we have $\partial_r u_- = u_-=z$ at $r=1$. Then $|\nablad u_-|^2 = |\nabla u_-|^2 - |\partial_r u_-|^2 = 1-z^2$. So the bracket in \eqref{eq:epsdot} equals $ (1-z^2) -2z^2 = 1-3z^2 = 3(x^2+y^2) -2$. The volume element of the sphere is $dx\,dy/\sqrt{1-x^2-y^2}$. To normalize $u$ we divide by  $\|u\|_-^2=\int_\Omega |\nabla u_-|^2 = \int_\Omega 1 = \vol(\Omega) = \frac43\pi$. Finally, $\epsilon+1 = 3$. Therefore, we get
\begin{gather*}
\epsilondot_1 = \frac9{4\pi} \int_{x^2+y^2<1} A(x,y) \frac{3(x^2+y^2)-2}{\sqrt{1-x^2-y^2}}\, dx dy ,\\
 A(x,y) := a(x,y,\sqrt{1-x^2-y^2}) + a(x,y,-\sqrt{1-x^2-y^2}) .
\end{gather*}

\subsection{Example: Half space}\label{subsec:ex plane}
We always assumed $\Omega$ to be compact. However, a similar problem can also be formulated for non-compact $\Omega$. For example, consider the half space
$$ \Omega = \{(x',x_n):\ x'\in \R^{n-1},\ x_n<0\}$$
We then consider the problem \eqref{I1}-\eqref{I3}, but we need to replace the decay condition \eqref{I4}. In view of the regularity discussion it is natural to consider $H^{1/2}$ boundary data, hence we demand a uniform $H^{1/2}$ condition: there is a constant $C$ so that 
\begin{equation} \label{eqn:L2 bounded}
 \int_{\R^{n-1}}|u(x',x_n)|^2\,dx' + 
 \int_{\R^{n-1}}|u(x',x_n)|\, |\nabla_{x'} u(x',x_n)|\,dx'
 \leq C \ \text{ for all } x_n  
\end{equation}
In particular, we can take the Fourier transform in the $x'$-variables,
$\uhat(\xi',x_n) = \int_{\R^{n-1}} e^{-ix'\cdot \xi'} u(x',x_n)\,dx'$, and $\Delta u=0$ yields $(\partial_{x_n}^2 - |\xi'|^2)\uhat = 0$ for $x_n\neq0$, with solutions $\uhat_\pm = a_\pm e^{-|\xi'| x_n} + b_\pm e^{|\xi'| x_n}$ (if $\xi'\neq0$). Then \eqref{eqn:L2 bounded} implies $b_+=a_-=0$, and \eqref{I2} gives $b_-=a_+$, so  
$$ u(x',x_n) = \int_{\R^{n-1}} e^{ix'\cdot\xi'} h(\xi') e^{-|\xi'|\, |x_n|} \,d\xi' $$
for a function $h$ on $\R^{n-1}$ satisfying $\int_{\R^{n-1}} |h(\xi')|^2 (1+|\xi'|)\, d\xi' < \infty$. We have $h=(2\pi)^{-n} \ghat$ for the boundary value $g(x')=u(x',0)$. Also $\partial_n u_\pm (x',0) = \mp\int e^{ix'\cdot\xi'} |\xi'| h(\xi')\, d\xi'$ (distributional Fourier transform), which shows that $\eps=1$ for any $g$.

Therefore, $\eps=1$ is the only plasmonic eigenvalue for the half space.
Incidentally, the explicit formulas show that the Dirichlet-Neumann operator is precisely the Fourier multiplier by $|\xi'|$, i.e. $\calN_-g = (|\xi'|\ghat)\check{ }$.

\ownremark{Open problems:
\begin{enumerate}
\item
Precise asymptotics of $\eps_k$ as $k\to\infty$.
\item
Asymptotics of $\eps_1$ for two domains approaching each other.
\item
What's $\eps_1$ for a periodic chain, or for perturbations of such?
\item
Inverse problem??
\item
Spectral geometry for $\eps_1$.
\end{enumerate}
\begin{conjecture}\label{conj:eps=1}
Low regularity plasmons with $\eps=1$ are smooth at boundary points where the curvature (WHICH CURVATTURE?) of $\dO$ does not vanish. If the curvature does not vanish anywhere then the space of plasmons with $\eps=1$  is finite dimensional.
\end{conjecture}
}

\end{document}